\let\olditem\item
\renewcommand{\item}{\setlength{\itemsep}{0pt}\setlength{\parskip}{0pt}\setlength{\parsep}{0pt}\setlength{\topsep}{0pt}\olditem}
\newtheorem{definition}{Definition}
\newtheorem{theorem}{Theorem}
\newtheorem{lemma}[theorem]{Lemma}
\newtheorem{fact}[theorem]{Fact}
\theoremstyle{plain}
\DeclareMathOperator{\Exp}{{\mathbb E}}
\DeclareMathOperator{\Reverse}{Reverse}
\DeclareMathOperator{\Sort}{Sort}
\newcommand{\eps}{\varepsilon}
\newcommand{\mi}{\mathrm{I}}
\newcommand{\h}{\mathrm{H}}
\newcommand{\Order}{\mathrm{O}}
\newcommand{\smallo}{\mathrm{o}}
\newcommand{\Omeg}{\mathrm{\Omega}}
\newcommand{\Thet}{\mathrm{\Theta}}
\renewcommand{\polylog}{\mathrm{polylog}}
\begin{document}

\title{Unidirectional Input/Output Streaming Complexity \\of Reversal
  and Sorting\thanks{Partially supported by the French ANR Blanc
    project ANR-12-BS02-005 (RDAM), the Singapore Ministry of
    Education Tier 3 Grant and also the Core Grants of the Centre for
    Quantum Technologies, Singapore.}}

\author[1]{Nathana\"el Fran\c{c}ois\thanks{\texttt{nathanael.francois@liafa.univ-paris-diderot.fr}}}
\author[2]{Rahul Jain\thanks{\texttt{rahul@comp.nus.edu.sg}}}
\author[3]{Fr\'ed\'eric Magniez\thanks{\texttt{frederic.magniez@cnrs.fr}}}
\affil[1]{Univ Paris Diderot, Sorbonne Paris-Cit\'e, LIAFA, CNRS, 75205 Paris, France}
\affil[2]{CQT and CS Department, National University of Singapore}
\affil[3]{CNRS, LIAFA, Univ Paris Diderot, Sorbonne Paris-Cit\'e, 75205 Paris, France}

\date{}

\maketitle

\begin{abstract}
We consider unidirectional data streams with restricted
access, such as read-only and write-only streams. For read-write
streams, we also introduce a new complexity measure called expansion,
the ratio between the space used on the stream and the input size.

We give tight bounds for the complexity of reversing a stream of length
$n$ in several of the possible models. In the read-only and write-only
model, we show that $p$-pass algorithms need memory space
$\Thet(n/p)$. But if either the output stream or the input stream is
read-write, then the complexity falls to $\Thet(n/p^2)$. It becomes
$\polylog(n)$ if $p=\Order(\log n)$ and both streams are read-write.

We also study the complexity of sorting a stream and give
two algorithms with small expansion. Our main sorting algorithm is
randomized and has $\Order(1)$ expansion, $\Order(\log n)$ passes and
$\Order(\log n)$ memory.
\end{abstract}

\section{Introduction}

\textbf{Background and Motivations.}
Streaming algorithms have been studied for 
estimating statistics, checking properties and computing functions
(more often with sublinear outputs) on massive inputs for several years.
However, less is known on computing functions that also have a massive
output, and therefore
require two data streams, one for the input and one for the output.
The notion of reversal complexity on a multi-tape machine,
which can be related to streaming complexity with multiple streams, was
first introduced in 1970 by Kameda and Vollmar~\cite{kv70} for
decision problems. The model however was not explored again until 1991
when Chen and Yap~\cite{cy91} considered the computable functions in this
model, and gave an algorithm for sorting using two streams with
$\Order(\log n)$ passes and (internal) memory space of size
$\Order(\log n)$. This was a significant improvement over the lower
bound for the case of a single stream which requires $\Omeg(n/s)$
passes, where $s$ is the size of the memory space, as proved by Munro
and Paterson~\cite{mp80}. 

Recently the interest for complexity in models with multiple streams
has been renewed. 
Indeed, streams are now considered as a model of external storage
allowing multiple sequential passes on them.
Hernich and Schweikardt~\cite{hs08} gave reductions
from classical complexity classes to the classes of problems decidable
with $\Order(1)$ streams. Grohe, Koch, Hernich and
Scwheikardt~\cite{gks07,ghs09} also gave several tight lower
bounds for multi-stream algorithms using $\smallo(\log n)$
passes. Gagie~\cite{g13} showed that two streams achieve perfect 
compression in polylogarithmic memory space and a polylogarithmic
number of passes. Beame, Huynh, Jayram and Rudra~\cite{bjr07,bh12}
also proved several
lower bounds on approximating frequency moments with multiple streams
and $\smallo(\log n)$ passes.
However, there is  a distinct lack of lower bounds for multi-stream algorithms 
with $\Omeg(\log n)$ passes. Indeed, most classical tools (such as
reduction to communication complexity) for proving lower bounds on
streaming algorithms fail for multiple streams. 
One exception is Ruhl's~\cite{Ruhl03} W-Stream and
StreamSort models. In W-Stream each (unidirectional) stream is
alternatively read-only and write-only, and passes are synchronized;
while this model seems similar to ours, it is actually much less
powerful and most lower bounds match the naive linear algorithms.
StreamSort is the W-Stream model augmented with a sorting
oracle and yields more interesting results, but of course trivializes
the problem we study in this paper.
Indeed, most classical tools (such as reduction to communication complexity)
for proving lower bounds on streaming algorithms fail for multiple streams.

We therefore take an opposite approach and
restrict the number of streams to two,
the {\em input stream} and the {\em output stream}.
This leads to our notion of {\em input/output streaming algorithms}.
For many scenarios, this model is in fact more realistic since, as
an external storage, it comes at a cost to use several streams
simultaneously. In fact, the idea of using only input and output and
constrained secondary storage can also be found in other models, such
as \cite{av88}. Additionally, we only allow passes in one direction on
both streams.
This may looks too restrictive since bidirectional passes provide
exponential speedup in several decision problems, as exhibited by
several works~\cite{mmn10,jn10,cckm13,km13,fm13}.
However, changing the direction of processing a stream can have a higher
cost: most hard drives spin in only one direction and reading them in
the other is not as practical. 
We also consider three types of stream accesses:
{\em read-only} (for the input stream), {\em write-only} (for the output stream)
and {\em read-write} (for any or both streams). We emphasise that by
write-only stream we mean that once the algorithm has written in a
cell, it can never overwrite it.

For read-write streams, we introduce the new complexity measure of {\em expansion} of a
streaming algorithm, i.e. the ratio between the maximal size of the stream
during the computation and the size of the input (this concept does
not make sense for read-only and write-only stream, which cannot be
used for intermediate computations). While space on
external storage is by definition not as constrained as memory space,
in the case of massive inputs requiring large hard drives, 
expanding them even by a factor $\Order(\log n)$ is not reasonable. To
the best of our knowledge, while it has sometimes been implicitely
limited to $1$ or $\Order(1)$, this measure of complexity has never been
studied before for potential trade-offs with space or time.

We then study two problems, $\Reverse$ and $\Sort$,
in the model of input/output streaming algorithms.
Both problems consists in copying the content of the input stream to the output stream,
in reverse order for the first problem, and in sorted order for the second one.
Because of its importance in real applications, $\Sort$ 
has been extensively studied in many contexts. However, there
is still a lot to say about this problem in face of new models of
computation that massive data arise. 
$\Reverse$ can be seen as the simplest instance of $\Sort$.
It is natural to study its complexity since we forbid any reverse pass on streams.
Moreover, any algorithm for $\Reverse$ gives a way to implement each of the efficient
bidirectional (and single stream) algorithms of~\cite{mmn10,km13,fm13} in our model,
thus giving another example of a speed-up with multiple streams. 

\smallskip

\textbf{Our results.}
In {Section~\ref{Reverse}}, we give tight bounds for the
complexity of $\Reverse$. We provide deterministic algorithms that are
optimal even against randomized ones.
Using communication complexity to prove lower bounds with multiple
streams is inherently difficult as soon as there are multiple passes
on both. Indeed, it is possible to copy one stream on the other one,
and since heads move separately,  the algorithm can access at any part
of the second stream while processing the first one. In the
read-only/write-only model, we instead skip communication complexity
and prove directly with information theory that a $p$-pass algorithm
needs space $\Omeg(n/p)$ (Theorem~\ref{theo_ROWO}).
In the read-only/read-write model, we similarly show that any $p$-pass
algorithm needs space $\Omeg(n/p^2)$
(Theorem~\ref{theo_RORW_lb}).
We provide an algorithm achieving that bound by copying large
(unreversed) blocks from the input, and
then placing correctly one element from each of the blocks during each pass
(Theorem~\ref{theo_RORW_alg}). Similar results hold in the
read-write/write-only model (Theorems~\ref{theo_RWWO_alg}
and~\ref{theo_RWWO_lb}). Last, we consider the read-write/read-write
model and give a $\Order(\log n)$-pass algorithm with polylogarithmic
memory space (Theorem~\ref{theo_RWRW}). 
All our algorithms presented 
make extensive use of the ability to only write part
of a stream during a pass. In the more restrictive W-Stream model
where this is not possible, as shown by \cite{Ruhl03}, a $p$-pass
algorithm requires memory $\Omeg(n/p)$.

In {Section~\ref{Sort}}, we consider problem $\Sort$. 
Even with two streams only, it appears to have a tight bound, as
the algorithm by Chen and Yap~\cite{cy91} matches the lower bound
proved by Grohe, Hernich and Scwheikardt~\cite{ghs09}. However, Chen
and Yap's algorithm was not designed with expansion in mind: it works
by replicating the input $n$ times, thus using $\Order(n^2)$ cells on
the input and output streams. In this context, linear expansion of the
input is not reasonable. We therefore give two algorithms with two
streams that improve on it.
The first one is deterministic. Similarly to the former algorithm, it
is based on Merge Sort, but has expansion $\Order(\log n)$ instead of
$\Omeg(n)$. It also uses $\Order(\log n)$ passes and space
$\Order(\log n)$ (Theorem~\ref{theo_MS}).
The second one is randomized and based on Quick Sort. It has
expansion $\Order(1)$, $\Order(\log n)$ passes and memory
 $\Order(\log n)$ (Theorem~\ref{theo_QS}). 

\section{Preliminaries}
In streaming algorithms (see~\cite{MuthuBook} for an introduction), a
{\em pass\/} on a string $w\in\Sigma^n$, for some finite alphabet $\Sigma$,
means that $w$ is given as a {\em stream} $w[1],w[2],\ldots,w[n]$, which
arrives sequentially, i.e., letter by letter in this order. Depending
on the model, $w$ may or may not be modified.
For simplicity, we always assume $|\Sigma| = \Order(n)$, except when explicitly stated.
Otherwise, one should transpose our algorithms such that the letters are also read sequentially, say bit by bit.
We now fix in the rest of the paper such a finite alphabet $\Sigma$ equipped with a total order.

We consider {\em input/output streaming algorithms} with two data
streams $X$ and $Y$. 
Stream $X$ initially contains the input, and stream $Y$ is initially
empty and  will contain the output at the end of the execution of the
algorithm. We denote by $X[i]$ the $i$-th cell of stream $X$, and
similarly for $Y$. 
The size of a stream is the number of its cells containing some data.
Our algorithms have also access to a random access memory $M$, usually
of sublinear size in the input size. 

We then parameterize such  algorithms by the operations allowed on
the input stream $X$ (read-only or read-write), on the output stream
$Y$ (write-only or 
read-write), the number of passes, the bit-size of the memory $M$ and the
expansion of the streams with regard to the size of the input. 
Whenever we mention the memory of an algorithm, we mean a random-access memory.

\begin{definition}[Input/Output streaming algorithm]
Let I be either RO or RW, and let O be either WO or RW.
Then a {\em $p(n)$-pass I/O streaming algorithm} with space $s(n)$
is an algorithm that, given $w\in\Sigma^n$ as a stream $X$, produces
its output on an initially empty stream $Y$ and such that:
\begin{itemize} 
\item It performs $p(n)$ sequential passes in total on $X$ and $Y$;
\item It maintains a memory $M$ of size at most $s(n)$ bits while
  processing  $X,Y$; 
\item If I is RO, $X$ cannot be modified;
\item If O is WO, $Y$ cannot be read and each cell 
of $Y$ can
  be written only once. 
\end{itemize}
Moreover, the algorithm has expansion $\lambda(n)$ if streams $X,Y$
have length (in number of cells) at most $\lambda(n) \times n$ during
its execution, 
for all input $w\in\Sigma^n$.
\end{definition}
Observe that we do not mention any running time in our
definition. Indeed, all our lower bounds will be stated independently
of it. Moreover, 
all our algorithms process each letter from each stream in
polylogarithmic time. They also have also polylogarithmic preprocessing and postprocessing times.

A usual way to get an algorithm with expansion $>1$ consists in annotating streams
with a larger alphabet $\Sigma'$, while keeping  (artificially) the same number of annotated cells.
In that case, one can simulate one annotated cell using $({\log|\Sigma'|})/{\log |\Sigma|}$
non-annotated cells.

For simplicity, we assume further that the input length $n$ is
always given to the algorithm in advance. Nonetheless, all our
algorithms can be adapted to the case in which $n$ is unknown until
the end of a pass.  
Moreover, this assumption only makes our lower bounds stronger.

We will use the two usual notions of randomized computing.
\begin{description}
\item[Monte Carlo:]
An algorithm computes a function $f$ on $\Sigma^n$ {\em with error
  $\eps\leq 1/3$} if  
for all inputs $w \in \Sigma^n$, it outputs $f(x)$ with
probability at least $1-\eps$.
\item[Las Vegas:]
An algorithm computes a function $f$ on $\Sigma^n$ {\em with
  failure $\eps\leq 1/2$} if  
for all inputs $w \in \Sigma^n$, it outputs $f(x)$ with
probability at least $1-\eps$,
otherwise it gives no output.
\end{description}

The two functions we study in this paper in term of streaming complexity are now formally defined.
\begin{definition}[$\Reverse$ and $\Sort$]
For a sequence $w=w[1]w[2]\ldots w[n] \in \Sigma^n$, let us define
$\Reverse(w)$ as $w[n]w[n-1]\ldots w[1]$.
When $\Sigma$ has a total order,
also define $\Sort(w)$ as the sorted permutation of $w$.
\end{definition}

For simplicity when proving lower bounds, for the proofs of
Theorems~\ref{theo_ROWO}, \ref{theo_RORW_lb} and \ref{theo_RWWO_lb}, 
instead of considering algorithms for $\Reverse$ writing $Reverse(X)$ on
$Y$ with passes from left to right, we will have them write $X$ on $Y$ with
passes from right to left, which is equivalent with a relabeling of
$Y$. The algorithms presented for upper bounds, however, read better
in the model where we write $\Reverse(X)$ on $Y$ using a pass from
left tor right.

When we describe a streaming algorithm using pseudo-code, a `For' loop
will always correspond to a single pass on each stream, except when
explicitly mentioned otherwise. Most algorithms will consist of
a `While' loop including a constant number of `For' loops,
i.e. passes. Therefore the number of iterations of the `While' loop
will give us the number of passes.

Throughout the paper, we always give  randomized lower bounds. While
communication complexity arguments fail in the context of mutliple
streams, we use information theory arguments. They have recently been
proved to be a powerful tool for proving lower bounds in
communication complexity, and therefore for streaming algorithms.
Here we not only use them directly, but also there is no direct
interpretation in term of communication complexity.

Let us remind now the notions of entropy $\h$ and
mutual information $\mi$. Let $X,Y,Z$ be three random variables. Then :
$$\quad \quad \quad \h(X)  =  -\Exp_{x\gets X}\log \Pr(X=x),$$
$$\quad \quad \quad \h(X|Y=y)  = -\Exp_{x\gets X}\log \Pr(X=x|Y=y),$$
$$\h(X|Y)  =  \Exp_{y\gets Y} \h(X|Y=y),$$
$$\mi(X:Y|Z)  = \h(X|Z)-\h(X|Y,Z).$$

The entropy and the mutual information are non negative
and satisfy $\mi(X:Y|Z)=\mi(Y:X|Z)$.
For $0\leq\eps\leq 1$, we denote by $\h(\eps)$ the entropy of the
Bernoulli variable
that takes value $1$ with probability $\eps$, and $0$ with probability $1-\eps$.

\section{Reversal with two streams}
\label{Reverse}
\subsection{Complexity in the Read-only/Write-only model}
When $\Sigma=\{0,1\}$, the naive algorithm, that copies at each pass
$s$ bits of the input in memory and then to the output in reverse order,
requires memory space $ s = {n}/{p}$ when performing $p$ passes. 
When processing each stream for the first time, it is in fact
obvious that $Y[i]$ cannot be written until $X[n-i+1]$ has been
read. In particular, for $p=2$ we have $s \geq n$ for any algorithm.
With multiple passes on each stream, the proof gets more technical
especially since the point where the streams cross, i.e. where we
can write on the second stream what was read on the first one during
the current pass depends on the pass, the input and the
randomness. However the $s \geq \Order({n}/{p})$ bound still applies for all
$p$, when the error probability of the algorithm is $\Order(1/p)$. 

Observe that a constant error probability can be reduced to $\Order(1/p)$
using $\Order(\log p)$ parallel repetitions, leading to an
$\Order(\log p)$ factor in the size of the memory space.

\begin{theorem}
Let $0 \leq \eps \leq 1/10$. Every $p$-pass randomized RO/WO algorithm 
    for $\Reverse$ on $\{0,1\}^n$ with error $\eps$ requires space
    $\Omeg ({n}/{p})$.
\label{theo_ROWO}
\end{theorem}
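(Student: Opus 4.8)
The plan is to argue via information theory that across the $p$ passes the bounded memory forms a bottleneck through which almost all of the $n$ input bits must be funnelled, and that each pass can push only $\Order(s)$ bits through it. Following the relabelling described above, I consider the equivalent task of copying a uniformly random $X\in\{0,1\}^n$ onto $Y$, where the $X$-head moves left-to-right and the $Y$-head moves right-to-left, with $Y[i]=X[i]$ required with probability $\ge 1-\eps$. Each of the $p$ passes sweeps both heads fully, so within pass $\ell$ the $X$-head runs $1\to n$ and the $Y$-head runs $n\to 1$; since at the start the $X$-head is below the $Y$-head and at the end it is above, the two heads cross. I let $\tau_\ell$ be the first moment of pass $\ell$ at which $X$-head $\ge$ $Y$-head, $c_\ell$ the crossover position, $M_\ell$ the memory at the start of pass $\ell$, and $\sigma_\ell$ the memory at time $\tau_\ell$; each of $M_\ell,\sigma_\ell$ holds at most $s$ bits.

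The combinatorial heart is a charging scheme. Every index $i\in[n]$ is written to $Y$ in a unique pass $\pi(i)$, and in that pass it lies on exactly one side of the crossover. If $i>c_{\pi(i)}$, the $Y$-head reaches $i$ before time $\tau_{\pi(i)}$, so $Y[i]$ is written while the $X$-head is still $\le c_{\pi(i)}<i$; hence $X[i]$ has not yet been read during this pass, and $Y[i]$ is a function only of $M_{\pi(i)}$, of the already-read prefix $X[1..c_{\pi(i)}]$, and of the randomness $R$. Symmetrically, if $i<c_{\pi(i)}$, then $Y[i]$ is written after $\tau_{\pi(i)}$, whereas $X[i]$ was already read by time $\tau_{\pi(i)}$; so $Y[i]$ is a function of $\sigma_{\pi(i)}$, of the suffix read after the crossover (positions $>c_{\pi(i)}$), and of $R$. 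At most one index per pass satisfies $i=c_\ell$. Thus, up to the $p$ diagonal indices, $[n]$ splits into right-side indices charged to the snapshots $M_\ell$ and left-side indices charged to the snapshots $\sigma_\ell$.

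I then convert each charge into a mutual-information bound. For a fixed pass $\ell$, the right-side bits $\{X[i]:i>c_\ell,\ \pi(i)=\ell\}$ are i.i.d.\ uniform and independent of $X[1..c_\ell]$ and of $R$, yet each is determined (when the output is correct) by $(M_\ell,X[1..c_\ell],R)$; hence their number equals $\mi(X_{\mathrm{right}}:M_\ell \mid X[1..c_\ell],R)\le \h(M_\ell)\le s$. The analogous estimate for the left-side bits gives at most $\h(\sigma_\ell)\le s$. Summing over the $p$ passes yields $n \le 2ps + p + \Order(p\log n)$, where the $\Order(p\log n)$ term accounts for describing the crossover positions $c_\ell$. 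Replacing the exact-recovery statements by Fano's inequality (each $Y[i]$ equals $X[i]$ with probability $\ge 1-\eps$, so each bit retains residual entropy at most $\h(\eps)$) turns this into $n\,(1-\h(\eps)) \le 2ps + p + \Order(p\log n)$; since $\h(\eps)\le \h(1/10)<1$, this is exactly $s=\Omeg(n/p)$, matching the naive algorithm.

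I expect the main obstacle to be precisely the issue flagged in the introduction: the crossover $c_\ell$ is a random variable depending on the input, the pass and the randomness, so naively conditioning on its value can correlate the prefix $X[1..c_\ell]$ with the right-side bits and destroy the independence used above. Handling this rigorously requires conditioning on the entire head trajectory (equivalently, on $c_\ell$ together with the read/written schedule) and checking that the bits on the far side of the crossover remain uniform given this conditioning; this is exactly where the information-theoretic formulation pays off relative to a communication-complexity reduction. A secondary technical point is propagating the constant gap $1-\h(\eps)>0$ cleanly through the per-pass sums and verifying that the $\Order(p\log n)$ term is dominated whenever $s\gtrsim\log n$.
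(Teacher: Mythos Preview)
Your high-level strategy coincides with the paper's: work in the relabelled model, identify in each pass a crossover index $L^t$, and argue that the bits on one side of the crossover must be transmitted through an $s$-bit memory snapshot. The paper's two snapshots are exactly your $M_\ell$ and $\sigma_\ell$ (in its notation, $M_n^{t-1}$ and $M_{L^t}^t$), and the $\Order(p\log n)$ slack you anticipate is its $\h(L^t)$ term.

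The gap you flag is real, and your proposed fix does not close it. Conditioning on the full head trajectory can in principle leak $\Omega(n)$ bits of $X$ (nothing prevents the algorithm from encoding input bits into head-movement decisions), so after that conditioning the ``far-side'' bits need not remain uniform and your mutual-information count collapses. There is a second, related issue you do not mention: the assignment $\pi(i)$ of each index to the pass that writes it is itself an input-dependent random variable, so the set $\{i:\pi(i)=\ell,\ i>c_\ell\}$ is random, and the statement ``their number equals $\mi(X_{\mathrm{right}}:M_\ell\mid\cdots)\le s$'' is not well-posed without further work.

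The paper sidesteps both difficulties. Rather than assigning each bit to a single pass, it lets $q_i^t=\Pr(Z^t[i]\neq\bot)$ and $\eps_i^t=\Pr(Z^t[i]\neq\bot,\ Z^t[i]\neq X[i])$ and proves a per-bit, per-pass inequality $\h(X[i]\mid Z^t[i],L^t)\le 1-q_i^t\bigl(1-\h(\eps_i^t/q_i^t)\bigr)$ (a Fano-type bound, Lemma~\ref{lem H(e/q)}). The random crossover is handled not by conditioning on a trajectory but only by conditioning on the single integer $L^t$, using the elementary estimates $\h(X[1,L]\mid X[L+1,n])\ge \Exp(L)-\h(L)$ (Fact~\ref{E-H}); this keeps the loss at $\Order(\log n)$ per pass. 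Combining these via Lemma~\ref{lem s(n)} gives $2s\ge \sum_i q_i^t\bigl(1-\h(\eps_i^t/q_i^t)\bigr)-\Order(\log n)$; summing over $t$, using $\sum_t q_i^t\ge 1-\eps$ and $\sum_t \eps_i^t\le\eps$ together with concavity of $\h$, yields $2ps\ge n(1-\eps)\bigl(1-\h(\eps/(1-\eps))\bigr)-\Order(p\log n)$. This is the machinery you would need to replace your charging scheme.
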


Expansion is not mentionned in this theorem as it does not make sense
on streams that are not read-write: the algorithm either cannot modify
the stream if it is read-only, or has no use for the extra data if it
is write-only.

Before proving Theorem~\ref{theo_ROWO}, we begin with two useful facts.
\begin{fact}
\label{E-H}
Let $X$ be uniformly distributed over $\{0,1\}^n$, and
let $J$ be some random variable on $\{0,1,\dots,n\}$ that may depend on $X$.
Then :
\vspace{-1mm}
$$\h(X[1,J]|J) \leq \Exp(J) \text{ and } \h(X[1,J]|X[J+1,n])  \geq
\Exp(J) - \h(J).$$
\vspace{-1mm}
Similarly, 
$$\h(X[J+1,n]|J)\leq n - \Exp(J) \text{ and } \h(X[J+1,n]|X[1,J]) \geq n - \Exp(J) - \h(J).$$
\end{fact}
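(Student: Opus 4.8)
The plan is to dispatch the two upper bounds first, since they follow directly from the definition of conditional entropy, and then to bootstrap the two lower bounds from them using a single structural identity together with the chain rule. For $\h(X[1,J]\mid J)\leq \Exp(J)$, I would unfold the definition of conditional entropy as $\h(X[1,J]\mid J)=\Exp_{j\gets J}\h(X[1,j]\mid J=j)$. For each fixed $j$ the variable $X[1,j]$ is supported on $\{0,1\}^j$, so its (conditional) entropy is at most $j$; averaging gives $\Exp_{j\gets J}[j]=\Exp(J)$. The bound $\h(X[J+1,n]\mid J)\leq n-\Exp(J)$ is identical, with $X[j+1,n]$ supported on $\{0,1\}^{n-j}$ and hence contributing at most $n-j$.

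The key observation for the lower bounds is that the pair of variable-length strings $(X[1,J],X[J+1,n])$ carries exactly the same information as the pair $(X,J)$: concatenation recovers $X$, the length of the first block recovers $J$, and conversely $(X,J)$ determines both blocks. Hence $\h(X[1,J],X[J+1,n])=\h(X,J)=\h(X)+\h(J\mid X)\geq \h(X)=n$, using that $X$ is uniform on $\{0,1\}^n$. I would then apply the chain rule,
\[
\h(X[1,J]\mid X[J+1,n]) = \h(X[1,J],X[J+1,n]) - \h(X[J+1,n]) \geq n - \h(X[J+1,n]),
\]
and bound $\h(X[J+1,n])\leq \h(X[J+1,n],J)=\h(J)+\h(X[J+1,n]\mid J)\leq \h(J)+n-\Exp(J)$, where the last step is exactly the upper bound just established. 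Combining the two lines gives $\h(X[1,J]\mid X[J+1,n])\geq \Exp(J)-\h(J)$. The symmetric inequality $\h(X[J+1,n]\mid X[1,J])\geq n-\Exp(J)-\h(J)$ follows by the same computation with the two blocks exchanged, now using $\h(X[1,J])\leq \h(J)+\Exp(J)$.

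The main subtlety I expect is the entropy identity $\h(X[1,J],X[J+1,n])=\h(X,J)$. It relies on treating the two blocks as variable-length strings that implicitly encode their own lengths; only under this convention is the split point $J$ recoverable and the map to $(X,J)$ a bijection, so that the entropies coincide. Once this point is set up carefully, the remainder is a routine use of the chain rule and of the subadditivity-type bound $\h(A)\leq \h(A,J)=\h(J)+\h(A\mid J)$.
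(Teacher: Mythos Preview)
Your proposal is correct and follows essentially the same approach as the paper: the upper bounds are declared ``direct'' there too, and the lower bounds are derived from them via the chain rule together with the observation that the variable-length blocks determine $J$. The only cosmetic difference is that the paper adds $J$ to the conditioning and writes $\h(X[1,J]\mid X[J+1,n])=\h(X\mid J)-\h(X[J+1,n]\mid J)\geq \h(X)-\h(J)-(n-\Exp(J))$, whereas you pass through the joint entropy $\h(X[1,J],X[J+1,n])=\h(X,J)\geq n$; these are the same computation rearranged.
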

\begin{proof}
$\h(X[1,J]|J) \leq \Exp(J)$ and $\h(X[J+1,n]|J) \leq n - \Exp(J)$ are
  direct. The second part uses the first one as follows:
\vspace{-1mm}
\begin{eqnarray*}
\h(X[1,J]|X[J+1,n]) & = & \h(X|J,X[J+1,n]) - \h(X[J+1,n]|J,X[J+1,n]) \\
& = & \h(X|J) - \h(X[J+1,n]|J) \\
& \geq & \h(X) - \h(J) - n + \Exp(J) = \Exp(J) - \h(J).
\end{eqnarray*}
\end{proof}

\begin{fact}[Data processing inequality]
\label{DTI}
Let $X,Y,Z,R$ be random variables such that $R$ is independent from $X,Y,Z$.
Then  for every function $f$
$$\h(X | Y,Z) \leq \h(f(X,R) | Y,Z)\quad \text{and}\quad \mi(X:Y | Z)
\geq \mi(f(X,R):Y | Z).$$ 
\end{fact}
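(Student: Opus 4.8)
The entire statement reduces to one structural consequence of the independence of $R$: conditioned on $X$, the processed variable $f(X,R)$ carries no information about $(Y,Z)$. So the first thing I would record is that $\mi(f(X,R):Y \mid X,Z)=0$, i.e. that $(Y,Z)-X-f(X,R)$ is a Markov chain. This holds because, once $X$ is fixed, $f(X,R)$ is a deterministic function of $R$ alone, and $R$ is independent of $(X,Y,Z)$; concretely I would check that the conditional law of $f(X,R)$ given $(X,Y,Z)$ does not depend on $(Y,Z)$, whence $\mi(f(X,R):(Y,Z)\mid X)=0$ and in particular the conditional term above vanishes. Everything after this is mechanical bookkeeping with the chain rule and the non-negativity of (conditional) mutual information.

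For the mutual-information inequality the plan is to expand the quantity $\mi(X,f(X,R):Y\mid Z)$ in its two orders, which gives
$$\mi(f(X,R):Y\mid Z)+\mi(X:Y\mid f(X,R),Z)=\mi(X:Y\mid Z)+\mi(f(X,R):Y\mid X,Z).$$
The last term vanishes by the Markov property just established, while $\mi(X:Y\mid f(X,R),Z)\ge 0$; together these yield $\mi(f(X,R):Y\mid Z)\le \mi(X:Y\mid Z)$, the second claim. The spectator variable $Z$ never plays a role here: every identity above holds verbatim with $Z$ adjoined to each conditioning, so I would carry it along passively.

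For the entropy inequality I would translate the same Markov structure through the identity $\mi(A:B\mid C)=\h(A\mid C)-\h(A\mid B,C)$, reducing the claim to the vanishing of the cross term $\mi(f(X,R):Y\mid X,Z)$ together with non-negativity of a residual conditional mutual information. The main obstacle — and the step I would treat most carefully — is the correct quarantine of the auxiliary coins $R$: their independence from $(X,Y,Z)$ is exactly what prevents $f(X,R)$ from leaking information between $X$ and $(Y,Z)$, and it is the only place where a naive "deterministic data processing" argument could silently fail once fresh randomness is in play. Once that conditional independence is in hand, both inequalities follow from the same short chain-rule computation.
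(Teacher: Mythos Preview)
The paper states this fact without proof, so there is nothing to compare your argument against; the note after the statement merely remarks that the classical formulation absorbs $R$ into a probabilistic function. Your treatment of the mutual-information half is correct and is exactly the standard chain-rule derivation: from the Markov relation $(Y,Z)\text{--}X\text{--}f(X,R)$ you get $\mi(f(X,R):Y\mid X,Z)=0$, and expanding $\mi(X,f(X,R):Y\mid Z)$ two ways gives the inequality.

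There is, however, a genuine problem with the entropy half, and your proposal does not escape it. As printed, the inequality $\h(X\mid Y,Z)\le \h(f(X,R)\mid Y,Z)$ is simply false: take $f$ constant, so that the right-hand side is $0$ while the left-hand side can be positive. Your paragraph for this part is vague---you say you would ``translate the same Markov structure through $\mi(A:B\mid C)=\h(A\mid C)-\h(A\mid B,C)$''---but no such translation can succeed, because the claim itself fails. What the paper actually needs (and uses, in the proof of Lemma~\ref{lem s(n)}) is the inequality with the processing applied to the \emph{conditioning} variable: $\h(X\mid Y,Z)\le \h(X\mid f(Y,R),Z)$. That version does follow from your Markov argument, since then $X\text{--}(Y,Z)\text{--}f(Y,R)$ gives $\mi(X:f(Y,R)\mid Z)\le \mi(X:Y\mid Z)$, i.e.\ $\h(X\mid Z)-\h(X\mid f(Y,R),Z)\le \h(X\mid Z)-\h(X\mid Y,Z)$. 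You should flag the typo rather than claim to prove the statement as written.
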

Note that the previous property is usually stated with no variable
$R$, then $f$ is defined as a {\em probabilistic function}.

\begin{proof}[Proof of Theorem~\ref{theo_ROWO}]
In this proof,
we use the equivalent model of the algorithm copying $X$ (and not
$\Reverse(X)$) on $Y$, but processing $Y$ only with passes from right
to left. This means that the two heads start on opposite ends and meet 
once each pass.

Consider a $p$-pass randomized RO/WO algorithm for
$\Reverse$ on $\{0,1\}^n$ with error probability $\eps$ and space $s\geq\log n$.
For simplicity, we assume that passes are
synchronized : whenever a pass on one stream ends, the head on the
other stream ends its own pass, and then eventually moves back to its original
position. This costs us at most a factor $2$ in the number of
passes. 

Let the input stream $X$ be uniformly distributed in $\{0,1\}^n$.
For each pass $1 \leq t \leq p$, let $Z^t \in 
\{0,1,\bot\}^n$ be the reverse of the data written on the output stream $Y$: if
nothing is written at pass $t$ and index $i$, then $Z^t[i] =
\bot$. This model corresponds to writing the same string on the output
stream as on the input stream, but going in different directions, an
makes the notations simpler as we want $X[i]=Z^t[i]$. Note
that because the algorithm cannot overwrite a letter, for each
$i$ there is at most one $t$ such that $Z^t[i] \not = \bot$.
Last, let  $L^t$ be the index where the reading
head and the writing head meet during pass $t$. 
Since passes are synchronized, $L^t$ is unique (but possibly depends
on the input and random choices). For $1 \leq i \leq n$, $1 \leq t
\leq p$, let $M_i^t$ be the state of the memory after the algorithm
reads $X[i]$ on pass $t$.

For  $1 \leq t \leq p$, since $s$ bounds the size of memory,
we have :  
$$s \geq \mi(X[1,L^t]:M_{L^t}^t|X[L^t+1,n])) \text{ and } s \geq
\mi(X[L^t+1,n]:M_{n}^{t+1}|X[1,L^t])).$$
Using the definition of mutual information, we get the following inequalities:
$$s \geq \h(X[1,L^t]|X[L^t+1,n]) - \h(X[1,L^t]|M_{L^t}^t,X[L^t+1,n]),$$
$$s \geq \h(X[L^t+1,n]|X[1,L^t]) - \h(X[L^t+1,n]|M_{n}^{t-1},X[1,L^t]).$$

We define the following probabilities : 
$q_i^t(l) = \Pr(Z^t[i] \not =\bot | L^t=l)$,
$q_i^t=\Pr(Z^t[i] \not =\bot)$, 
$\eps_i^t(l) = \Pr(Z^t[i] \not =\bot, Z^t[i] \not = X[i] | L^t = l)$ and 
$\eps_i^t = \Pr(Z^t[i] \not =\bot, Z^t[i] \not = X[i])$. By definition, they also satisfy
$\eps_i^t = \Exp_{l \sim  L^t}(\eps_i^t(l))$ and 
$q_i^t = \Exp_{l \sim  L^t}(q_i^t(l))$.
Note that by hypothesis\footnote{Here we only need the hypothesis that
  each bit of $Y$ is wrong with probability at most $\eps$, and not
  the stronger hypothesis that $Y \not = \Reverse(X)$ with probability
  at most $\eps$.} and because there is no rewriting, 
$\sum_{t=1}^p \eps_i^t \leq \Pr(\exists t, Z^t[i] \not= \bot, Z^t[i] \not = X[i]) \leq \eps$.
Lemmas~\ref{lem s(n)} and~\ref{lem H(e/q)} give us
these inequalities: 
$$2 s \geq n - \sum_{i=1}^n \h(X[i]|Z^t[i],L^t) - O(\log n),$$ 
$$\h(X[i]|Z^t[i],L^t) \leq 1 - q_i^t \left(1-\h ({\eps_i^t}/{q_i^t} \right) ).$$ 
Combining them yields :
$$2s \geq \sum_{i=1}^nq_i^t \left(1-\h\left(({\eps_i^t}/{q_i^t}) \right)\right) - O(\log n).$$

Let $ \alpha_i = \sum_{t=1}^p q_i^t$. Then $\alpha_i = \Pr[Y[i] \not =
  \bot]$ satisfies $\alpha_i \geq 1 - \eps$ by hypothesis.
Now summing over all passes leads to
$2ps  \geq  \sum_{i=1}^n \alpha_i \sum_{t=1}^p
(q_i^t/\alpha_i) (1-\h (\eps_i^t/q_i^t)) - \Order(p\log n).$

The concavity of $\h$ gives us $ \sum_{t=1}^p ({q_i^t} / {\alpha_i}) \h
({\eps_i^t}/{q_i^t} ) \leq \h ({\eps}/{(1-\eps)})$. This means, replacing
$\alpha_i$ and $\eps_i^t$ by their upper bounds, that
$ 2ps \geq n(1-\eps) (1-\h ({\eps}/{(1-\eps)} )) - \Order(p\log n)$. Since
Theorem~\ref{theo_ROWO} has $\eps \leq 0.1$ as an hypothesis,
our algorithm verifies $ps \geq \Omeg(n)$.
\end{proof}

\begin{lemma}
Assuming the hypotheses of Theorem~\ref{theo_ROWO}, at any given pass
$t$, $$2s \geq n - \sum_{i=1}^n \h(X[i]|Z^t[i],L) - O(\log n).$$
\label{lem s(n)}
\end{lemma}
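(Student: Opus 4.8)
The plan is to start from the fact that the memory holds at most $s$ bits, so $\h(M_L^t\mid\cdots)\le s$ and hence the two conditional mutual informations with the memory are each bounded by $s$:
\[
s \ge \mi(X[1,L]:M_{L}^t\mid X[L+1,n]),\qquad s\ge \mi(X[L+1,n]:M_{n}^{t-1}\mid X[1,L]),
\]
where $L=L^t$ is the meeting index of pass $t$ and $M_n^{t-1}$ is the memory at the start of the pass. Expanding both by the definition of mutual information and adding gives
\[
2s \ge \h(X[1,L]\mid X[L+1,n]) + \h(X[L+1,n]\mid X[1,L]) - R,
\]
with $R:=\h(X[1,L]\mid M_L^t,X[L+1,n]) + \h(X[L+1,n]\mid M_n^{t-1},X[1,L])$ collecting the two residual terms. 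So it remains to lower bound the two leading terms by $n-\Order(\log n)$ and to upper bound $R$ by $\sum_i\h(X[i]\mid Z^t[i],L)$.

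For the leading terms I would apply Fact~\ref{E-H} with $J=L$, which is exactly a random variable on $\{0,\dots,n\}$ depending on $X$: this yields $\h(X[1,L]\mid X[L+1,n])\ge\Exp(L)-\h(L)$ and $\h(X[L+1,n]\mid X[1,L])\ge n-\Exp(L)-\h(L)$. Their sum is at least $n-2\h(L)$, and since $L$ takes at most $n+1$ values, $\h(L)\le\log(n+1)=\Order(\log n)$; this is precisely where the $\Order(\log n)$ slack in the statement comes from.

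The crux is bounding $R$, and this is where the unidirectional structure enters. Condition on $L=l$. Because the read head sweeps $X$ left to right while the write head sweeps the output right to left and the two meet at index $l$, the portion of pass $t$ before the meeting writes exactly the cells $Z^t[l+1,n]$ from the pre-pass memory $M_n^{t-1}$, the prefix $X[1,l]$ read so far, and the input-independent randomness $R_t$; symmetrically the portion after the meeting writes $Z^t[1,l]$ from $M_l^t$, the suffix $X[l+1,n]$, and $R_t$. Thus $Z^t[l+1,n]$ is a function of $(M_n^{t-1},X[1,l],R_t)$ and $Z^t[1,l]$ is a function of $(M_l^t,X[l+1,n],R_t)$. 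I would then apply the data-processing inequality (Fact~\ref{DTI}) to trade the raw memory for these outputs, followed by subadditivity of entropy to split a block into its coordinates and ``conditioning reduces entropy'' to discard the extra conditioning on the complementary input block, obtaining
\[
\h(X[1,l]\mid M_l^t,X[l+1,n],L=l)\le\sum_{i\le l}\h(X[i]\mid Z^t[i],L=l),
\]
\[
\h(X[l+1,n]\mid M_n^{t-1},X[1,l],L=l)\le\sum_{i>l}\h(X[i]\mid Z^t[i],L=l).
\]
Taking the expectation over $L$ and adding gives $R\le\sum_i\h(X[i]\mid Z^t[i],L)$, which combined with the previous paragraph yields $2s\ge n-\sum_i\h(X[i]\mid Z^t[i],L)-\Order(\log n)$.

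The main obstacle I expect is the causal-locality claim of the third paragraph: making precise that everything written before (respectively after) the meeting is a function only of the pre-pass memory together with the prefix read so far (respectively of the memory at the meeting and the subsequent suffix). This rests on the exact definition of the meeting index $L^t$ and the synchronization convention, and on treating $L^t$ as a legitimate conditioning variable; one must also be careful that the per-coordinate data-processing step is invoked with $R_t$ independent of $X$, exactly as Fact~\ref{DTI} requires.
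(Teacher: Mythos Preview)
Your proposal is correct and follows essentially the same approach as the paper: both start from the two memory bounds $s\ge\mi(X[1,L]:M_L^t\mid X[L+1,n])$ and $s\ge\mi(X[L+1,n]:M_n^{t-1}\mid X[1,L])$, use Fact~\ref{E-H} to turn the leading entropies into $n-2\h(L)$, and then invoke the causal structure of the pass (what is written before and after the meeting point) together with the data-processing inequality, the chain rule, and dropping conditioning to bound the residual by $\sum_i \h(X[i]\mid Z^t[i],L)$. The only difference is expository order; the ingredients and the logic are identical.
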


\begin{proof}
In this proof, we write $Z[i]$ for $Z^t[i]$ since there is generally no
ambiguity. We similarly omit the $t$ on other notations.

The data processing inequality (Fact~\ref{DTI}) gives us the following
inequality :
$\h(X[1,L]|M_L,X[L+1,n],L) \leq \h(X[1,L]|Z[1,L],L)$. We can 
rewrite this as 
$$\h(X[1,L]|M_L,X[L+1,n],L) \leq \Exp_{l \sim  L}(\h(X[1,l]|Z[1,l],L=l)).$$ 
Using the chain rule and removing
conditioning, we get 
$$\h(X[1,L]|M_L,X[L+1,n],L) \leq \Exp_{l \sim L}(\sum_{i=1}^l \h(X[i]|Z[i],L=l)).$$
Similarly, 
$$\h(X[L+1,n]|M_n^{t-1},X[1,L],L) \leq
\Exp_{l \sim L} (\sum_{i=1}^l \h(X[i]|Z[i],L=l)).$$
Using that both $M_{L}$ (where $M_L=M^t_{L^t}$) and $M_{n}^{t-1}$ are
of size at most $s$ bits, we get 
$$2 s \geq \mi(X[1,L]:M_{L^t}^t|X[L+1,n]) + \mi(X[L+1,n]:M_{n}^{t-1}|X[1,L]).$$

Then we conclude by combining the above inequalities and using Fact~\ref{E-H} as follows:
\begin{eqnarray*}
2 s & \geq & \Exp(L) - \h(L) + n - \Exp(L) - \h(L) \\ && -
\h(X[1,L]|M_{L}^t,X[L+1,n]) - \h(X[L+1,n]|M_{n}^{t-1},X[1,L])\\
 & \geq & n - \Exp_{l \sim L} \left(\sum_{i=1}^l
\h(X[i]|Z[i],L=l) + \sum_{i=l+1}^n
\h(X[i]|Z[i],L=l) \right) - \Order(\log n)\\
 & = & n - \sum_{i=1}^n \h(X[i]|Z[i],L) - O(\log n).
\end{eqnarray*}
\end{proof}

\begin{lemma}
Assuming the hypotheses of Theorem~\ref{theo_ROWO}, for any pass $t$,
$\h(X[i]|Z^t[i],L_t) \leq 1 - q_i^t (1-\h (\eps_i^t/q_i^t))$
\label{lem H(e/q)}
\end{lemma}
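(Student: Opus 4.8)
The plan is to bound the conditional entropy $\h(X[i]\mid Z^t[i],L_t)$ by splitting on whether the output cell $Z^t[i]$ is written (i.e.\ $\neq\bot$) during pass $t$. First I would condition on a fixed value $L_t=l$, so that everything becomes a statement about the single bit $X[i]$ and the single symbol $Z^t[i]\in\{0,1,\bot\}$, and then take the expectation over $l\sim L_t$ at the end. The key observation is that $X[i]$ is a uniform bit, so $\h(X[i])=1$, and the only thing that can reduce its entropy is the information revealed by $Z^t[i]$; this happens precisely when $Z^t[i]\neq\bot$.

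The main computation is to expand, for fixed $l$,
\begin{align*}
\h(X[i]\mid Z^t[i],L_t=l) &= \Pr(Z^t[i]=\bot\mid L_t=l)\,\h(X[i]\mid Z^t[i]=\bot,L_t=l)\\
&\quad + \Pr(Z^t[i]\neq\bot\mid L_t=l)\,\h(X[i]\mid Z^t[i]\neq\bot,L_t=l).
\end{align*}
The first term is at most $(1-q_i^t(l))\cdot 1$, since conditional entropy of a bit is at most $1$. For the second term, I would bound the conditional entropy $\h(X[i]\mid Z^t[i]\neq\bot,L_t=l)$ by $\h(\eps_i^t(l)/q_i^t(l))$: given that a symbol was written, it equals $X[i]$ except with conditional probability $\eps_i^t(l)/q_i^t(l)$ (the probability of being written \emph{and} wrong, divided by the probability of being written), and Fano-type reasoning bounds the entropy of a bit by the binary entropy of the probability that one can predict it incorrectly. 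This yields
$$\h(X[i]\mid Z^t[i],L_t=l)\ \leq\ (1-q_i^t(l)) + q_i^t(l)\,\h\!\left(\eps_i^t(l)/q_i^t(l)\right) = 1 - q_i^t(l)\bigl(1-\h(\eps_i^t(l)/q_i^t(l))\bigr).$$

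Finally I would take the expectation over $l\sim L_t$. The right-hand side is $1-q_i^t(l)+q_i^t(l)\h(\eps_i^t(l)/q_i^t(l))$, and I want to show its expectation is at most $1-q_i^t+q_i^t\h(\eps_i^t/q_i^t)$, i.e.\ that $\Exp_l[\,q_i^t(l)\,\h(\eps_i^t(l)/q_i^t(l))\,]\leq q_i^t\,\h(\eps_i^t/q_i^t)$. This is exactly where concavity of the binary entropy function enters: writing $\phi(q,\eps)=q\,\h(\eps/q)$, the map is jointly concave (it is the perspective function of the concave $\h$), so by Jensen's inequality $\Exp_l[\phi(q_i^t(l),\eps_i^t(l))]\leq \phi(\Exp_l q_i^t(l),\Exp_l \eps_i^t(l))=\phi(q_i^t,\eps_i^t)$, using the stated identities $q_i^t=\Exp_l q_i^t(l)$ and $\eps_i^t=\Exp_l \eps_i^t(l)$. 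The main obstacle is making the Fano-type bound on the written case rigorous—one must be careful that conditioning on $Z^t[i]\neq\bot$ (rather than on a fixed written value) still lets one read off the error probability as $\eps_i^t(l)/q_i^t(l)$—and then invoking the correct joint-concavity statement for the perspective of $\h$ to justify the averaging step over $L_t$.
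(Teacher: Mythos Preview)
Your proposal is correct and the first half---conditioning on $L_t=l$, splitting on the event $\{Z^t[i]=\bot\}$, bounding the unwritten case by $1$ and the written case by $\h(\eps_i^t(l)/q_i^t(l))$ via the binary Fano inequality---is exactly what the paper does (your worry about conditioning on $Z^t[i]\neq\bot$ rather than on a fixed value is unfounded: binary Fano gives $\h(X\mid Z)\leq\h(\Pr(X\neq Z))$ for any joint distribution on a pair of bits, which is precisely what you need here).

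Where you diverge from the paper is the averaging over $l\sim L_t$. You invoke joint concavity of the perspective $\phi(q,\eps)=q\,\h(\eps/q)$ and Jensen; this is correct and clean, since the perspective of a concave function is jointly concave. The paper instead proves this inequality from scratch by an information-theoretic argument: it introduces the ternary variable $R_i$ (equal to $0$ if $X[i]=Z[i]$, $2$ if $Z[i]=\bot$, $1$ otherwise), observes the identity $q\,\h(\eps/q)=\h(q-\eps,\eps,1-q)-\h(q)$, rewrites $\Exp_l[q_i(l)\h(\eps_i(l)/q_i(l))]$ as $\h(R_i\mid L)-\h(\mathbf{1}\{Z[i]=\bot\}\mid L)$, and then uses the data processing inequality $\mi(\mathbf{1}\{Z[i]=\bot\}:L)\leq\mi(R_i:L)$ (the indicator being a function of $R_i$) to bound this by $\h(R_i)-\h(\mathbf{1}\{Z[i]=\bot\})=q_i\,\h(\eps_i/q_i)$. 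Your route is shorter if one takes the perspective-concavity fact as known; the paper's route is more self-contained and stays entirely inside information theory.
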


\begin{proof}
As above, we omit the $t$ in the proof, as there is no ambiguity.

The statement has some similarities with Fano's inequality. Due to the specificities of our context, we have to revisit its proof as follows. First we write
\begin{eqnarray}
\h(X[i]|Z[i],L) & \leq & \Exp_{l \sim L}(\h(X[i]|Z[i],L=l)) \nonumber \\
& \leq & \Exp_{l \sim L}(q_i(l)\h(X[i]|Z[i],L=l,Z[i] \not = \bot)
\nonumber \\
 & & \quad + (1-q_i(l))\h(X[i]|L=l,Z[i] = \bot) \nonumber \\
& \leq & \Exp_{l \sim L} \left( q_i(l)\h \left(
\frac{\eps_i(l)}{q_i(l)}\right) + 1-q_i(l) \right)\nonumber \\
& = & 1 - q_i + \Exp_{l \sim L} \left( q_i(l) \h \left(\frac{\eps_i(l)}{q_i(l)}\right)\right) \label{H(X|Z)}.
\end{eqnarray}

By replacing the entropy with its definition, we can see that for any
$1\geq q \geq \eps>0$, we have  
$q\h\left(\frac{\eps}{q}\right)= \h(q-\eps,\eps, 1-q) - \h(q)$, 
where $\h(x,y,z)$ is the entropy
of a random variable $R$ in $\{0,1,2\}$ with $\Pr(R=0)=x$, $\Pr(R=1)=y$
and $\Pr(R=2)=z$. Let $R_i$ be such that $R_i=0$ if $X[i]=Z[i]$, $R_i=2$
if $Z[i] = \bot$ and $R_i=1$ otherwise.
Note that $(Z[i] = \bot)$ is a function of $R_i$.
Therefore:

\begin{eqnarray*}
\Exp_{l\sim L} \left( q_i(l)\h
\left(\frac{\eps_i(l)}{q_i(l)}\right)\right) & = & \Exp_{l \sim
  L}(\h(R_i|L=l) - \h((Z[i] = \bot)|L=l))\\
& = & \h(R_i|L) -
\h((Z[i] = \bot)|L)\\
& = & \h(R_i) - \h((Z[i]=\bot)) + \mi((Z[i]=\bot)|L) - \mi(R_i|L).
\end{eqnarray*}

By the data processing inequality, $\mi((Z[i]=\bot):L) \leq
\mi(R_i:L)$, so 
$$\Exp_{l\sim L} (q_i(l)\h (\eps_i(l)/q_i(l)))) \leq q_i\h
(\eps_i/q_i)).$$
Combining this with inequality \ref{H(X|Z)} gives us the lemma.
\end{proof}

\subsection{Complexity of Read-only/Read-write and Read-Write/Write-Only}

In this section, we prove tight bounds in the Read-only/Read-Write and
Read-Write/Write-Only model. 
These models are more complex than the
previous one since an algorithm may now modify $Y[i]$ or $X[i]$
several times, and use that as additional memory.

\smallskip

\textbf{Algorithm.}
As a subroutine we use Algorithm~\ref{RORWalg}, which performs $\Order(\sqrt{n})$ passes
and uses space $\Order(\log|\Sigma|)$. It works by copying blocks of size $\Order(\sqrt{n})$ 
from the input stream to the output stream 
without reversing them (otherwise there would not be enough space in
the memory), but putting them in the correct order pairwise. In
addition, during each pass on the output, it moves one element from
each block already copied to the correct place. Since blocks have as
many elements as there are passes left after they are copied, the output stream is in the correct order at the end of the execution .

\begin{lstlisting}[caption={RO/RW streaming algorithm for $\Reverse$},label=RORWalg,captionpos=t,float,abovecaptionskip=-\medskipamount,mathescape]
$p \gets \sqrt{2n}$, $t \gets 1$, $i_1 \gets p$;
While {$t \leq p$} (*@\label{RORW_loop}@*)
  If $t > 1$ then $(R,l) \gets (Y[n-i_t],n-i_t)$
  If $i_t < n$ then
    $Y[n-i_t-(p-t),n-i_t] \gets X[i_t-(p-t),i_t]$ // Order is unchanged (*@\label{copy}@*)
    $i_{t+1} \gets i_t+p-t$
  For $m=t-1$ to $1$ (*@\label{RORW_pass}@*)
    Put $R$ in the right place $Y[l']$ // $l'$ computed from $l,m,p,n$ (*@\label{Y[l]}@*)
    $(R,l) \gets (Y[l'],l')$
  $t \gets t+1$
\end{lstlisting}

\begin{theorem}
\label{theo_RORW_alg}
There is a deterministic algorithm such that, given $n$ and 
$p\leq\sqrt{n}$,
it is a $p$-pass RO/RW streaming algorithm for $\Reverse$ on
$\Sigma^n$ with space $\Order(\log n+(n\log|\Sigma|)/p^2)$ and expansion $1$.
\end{theorem}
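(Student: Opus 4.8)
The plan is to prove the theorem in two stages: first analyse Algorithm~\ref{RORWalg} as written, which handles the extreme case $p=\sqrt{2n}$, and then bootstrap it to every $p\le\sqrt n$ by running the same routine on \emph{super-cells}. For the base routine I would first argue that Algorithm~\ref{RORWalg} computes $\Reverse(X)$ on $Y$ in $\sqrt{2n}$ passes, with memory $\Order(\log|\Sigma|+\log n)$ and expansion $1$. The complexity claims are immediate: the memory holds only the carried symbol $R\in\Sigma$ together with the $\Order(1)$ indices $t,i_t,l,l'$, each of $\Order(\log n)$ bits; each iteration of the While loop on line~\ref{RORW_loop} performs one forward sweep (the block copy of line~\ref{copy} followed by the single relocation pass of line~\ref{RORW_pass}), and since $\sum_t (p-t+1)=\Thet(p^2)=n$ all $n$ cells are copied within $p=\sqrt{2n}$ iterations; and since $Y$ is only ever written inside $[1,n]$, the expansion is $1$. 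The real content is correctness, for which I would set up a loop invariant: the block copied at pass $t$ has size $p-t+1$ and is placed unreversed, and in each of the $p-t$ passes that remain, the inner loop relocates exactly one of its elements to its final (reversed) position. I would then show that the index $l'$ computed on line~\ref{Y[l]} routes the carried element to a cell that is henceforth final and never revisited, so that the relocations across all blocks form a collision-free permutation whose orbits are exhausted precisely when $t$ reaches $p$; this is what guarantees $Y=\Reverse(X)$ at termination.

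For the bootstrapping step, given $p\le\sqrt n$, I would group the $n$ input cells into $m=\lfloor p^2/2\rfloor$ consecutive super-cells $S_1,\dots,S_m$, each of $k=\Thet(n/p^2)$ original cells, and regard each $S_j$ as one symbol over the annotated alphabet $\Sigma'=\Sigma^{k}$ (padding the last super-cell if $k\nmid n$, which affects only constants). Since $\Reverse(S_1\cdots S_m)=\Reverse(S_m)\cdots\Reverse(S_1)$, it suffices to (i)~reverse the order of the super-cells and (ii)~reverse each super-cell internally. For (i) I would run the base routine verbatim on the $m$ super-cells; this costs $\sqrt{2m}\le p$ passes. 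For (ii), every time the routine transfers a super-cell from $X$ to $Y$ on line~\ref{copy}, I buffer its $k$ cells in memory and emit them in reversed internal order; crucially the $Y$-head still advances monotonically over the window, since I simply output the buffered cells from last to first as the head moves forward, so each While-iteration remains a single unidirectional pass and the pass count is unchanged. The relocations of line~\ref{RORW_pass} then move these already-reversed super-cells as opaque $k$-cell blocks, needing only the same $k$-cell buffer and no further reversal.

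Collecting the costs: the memory is one super-cell buffer plus the $\Order(1)$ indices, i.e. $\Order\!\big(k\log|\Sigma|+\log n\big)=\Order\!\big(\log n+(n\log|\Sigma|)/p^2\big)$ bits; the number of passes is $\sqrt{2m}\le p$; and since neither stream ever exceeds its $n$ cells, the expansion is $1$. The main obstacle is entirely in the first stage: proving the routing invariant for line~\ref{Y[l]}, namely that carrying one element per remaining pass under the explicit index rule places every element of every block correctly and without two carried elements ever contending for the same cell. Once that combinatorial core is established, the super-cell stage is essentially mechanical, the only delicate point being the observation that internal reversal can be realised by reordering a memory buffer rather than by moving a head backwards, so that unidirectionality and the pass budget are both preserved.
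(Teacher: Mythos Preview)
Your proposal is correct and follows essentially the same two-stage architecture as the paper: analyse Algorithm~\ref{RORWalg} at the extremal parameter $p\approx\sqrt{2n}$, then reduce general $p$ to that case by grouping $\Theta(n/p^2)$ consecutive cells into super-letters over $\Sigma^k$. Your correctness sketch (each block copied at pass $t$ has $p-t$ misplaced elements, and each subsequent pass relocates one of them along a chain that never collides) is precisely the paper's argument, which it phrases concretely via the sets $B_t=\{n-i_{t+1}+1,\dots,n-i_t-1\}$ and the observation that the For loop carries $R$ from $B_t$ into $B_{t-1}$, so $|B_m|\le p-m$ elements are cleared in the $p-m$ remaining iterations.

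One point worth noting: you make explicit something the paper glosses over. Treating super-cells as atomic symbols and running Algorithm~\ref{RORWalg} yields $S_m S_{m-1}\cdots S_1$, not $\Reverse(S_m)\cdots\Reverse(S_1)$; the internal reversal of each super-cell still has to be done. Your fix---buffer each super-cell during the block copy of line~\ref{copy} and emit it in reversed internal order, then treat it as opaque during the relocations---is exactly the right patch, costs only the $\Theta(k\log|\Sigma|)$ buffer already budgeted, and preserves unidirectionality since only the buffer order, not the head motion, changes. The paper presumably intends the same thing (once a super-cell sits in memory it can be written out in any order), but does not say so.
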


\begin{proof} 
We will prove that Algorithm~\ref{RORWalg} satisfies the theorem when
$p=2\sqrt{n}$.
For the general case, the algorithm treats
groups of $m=4n/p^2$ letters as though they were just one letter of
the new alphabet $\Sigma^m$, 
and then runs Algorithm~\ref{RORWalg}. This uses spaces $\Order (\log(n/m)+(n/m) (\log|\Sigma^m|)/p^2) = \Order (\log n+ (n\log|\Sigma|)/p^2)$.

We now prove the theorem for $p=2\sqrt{n}$. First, observe that every element of the input is copied on the output, as $
\sum_{t=1}^{p}(p-t) = {p(p+1)}/{2} = n+ \sqrt{n} \geq n$.

Let $1 \leq t < p$.  The subword $X[i_t-(p-t)-1,i_t]$ is initially
copied at line~\ref{copy} on
$Y[n-i_t-(p-t)-1,n-i_t]$. Therefore only $Y[n-i_t]$ is correctly placed.
Let $B_t$ be $\{n-i_t-(p-t),n-i_t-1\}=
\{n-i_{t+1}+1,n-i_t-1\}$. 
Then $B_t$ denotes the indices (on $Y$) of elements copied during
the $t$-th iteration of the While loop that are incorrectly
placed. For each $l \in B_t$, the correct place for $Y[l]$ is in
$\{n-i_t+1,n-i_t+p-t\} =
\{n-i_t+1,n-(i_{t-1}+1)\}=B_{t-1}$, where by convention $B_0$ is defined with $i_0=0$. 
Therefore, in the $(t+1)$-th iteration
of the while loop, the first $Y[l]$ we place correctly goes from
$l=n-i_{t+1}+1=n-i_t-(p-t) \in B_{t}$ to some $l' \in B_{t-1}$. 
Then recursively the previous value of $Y[l']$ goes in $B_{t-2}$, and so on until we reach
$B_0$ where nothing was written initially. 

Thus, the For loop places correctly one element of each of $B_{t-1}$, $B_{t-2}, \dots,
B_1$. Observe that $B_m$ has at most $p-m$ elements incorrectly placed. Moreover,
there are $(p-m)$ remaining iterations of the While loop after $B_m$ is written.
Therfore all elements are places correctly when Algorithm~\ref{RORWalg} ends.

Now we prove that Algorithm~\ref{RORWalg} has the claimed
complexity. It only starts a new pass when $t$ increases, at each
execution of the While loop at line~\ref{RORW_loop}.
Indeed, each execution of line~\ref{Y[l]} can be performed within the current pass since
 $Y[l]$ moves forward to a new index in $B_{m-1}$ as explained before. 
Therefore 
Algorithm~\ref{RORWalg} runs
in $p$ passes. Since it keeps at most two elements in memory at any
time, and only needs to keep track of the current position, $n$, $p$, $t$ and
$m$, it uses $O(\log n + \log|\Sigma|)$ memory.
\end{proof}

\begin{theorem}
\label{theo_RWWO_alg}
There is a deterministic algorithm such that, given $n$ and $p\leq\sqrt{n}$,
it is a $p$-pass RW/WO streaming algorithm for $\Reverse$ on
$\Sigma^n$ with space $\Order(\log n+(n\log|\Sigma|)/p^2)$ and expansion $1$.
\end{theorem}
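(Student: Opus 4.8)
The plan is to build the \emph{mirror} of the RO/RW algorithm of Theorem~\ref{theo_RORW_alg}, exchanging the roles played by the two streams. In Algorithm~\ref{RORWalg} the input stream is only read while all the rearranging is carried out on the read-write output stream $Y$; here the situation is reversed, so I would perform every rearrangement on the read-write input stream $X$ and reserve the write-only stream $Y$ for \emph{finalized} letters, each cell of $Y$ being written exactly once and never read. As in the RO/RW case it suffices to treat $p=2\sqrt{n}$ and then pack $m=4n/p^2$ consecutive letters into one letter of the alphabet $\Sigma^m$ to obtain the general statement, which is what accounts for the $\Order(\log n+(n\log|\Sigma|)/p^2)$ space.

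Recall that we write $\Reverse(X)$ on $Y$ with left-to-right passes, so the cells of $Y$ must be produced in the order $X[n],X[n-1],\dots,X[1]$. Dually to Algorithm~\ref{RORWalg}, where a block copied at pass $t$ still has about $p-t$ misplaced elements that get corrected one per pass over the $p-t$ remaining passes, I would \emph{finalize} at pass $t$ a block of $X$ whose misplaced elements have been brought into reversed order, one per pass, during the $t$ preceding passes; thus the finalized blocks grow with $t$ rather than shrink. Concretely, during each pass I read $X$ left to right, use $\Order(1)$ memory cells to move one element of every not-yet-finalized block one step closer to its destination inside $X$ (exactly the shuffling that Algorithm~\ref{RORWalg} performs on $Y$), and then append, in a single contiguous write, the block that has just become correct to the current end of $Y$.

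The combinatorial bookkeeping is then identical to the proof of Theorem~\ref{theo_RORW_alg}: the sets $B_t$ of misplaced indices, the claim that each pass corrects one element of every active block, and the counting $\sum_{t}(p-t)\ge n$ guaranteeing that every letter is eventually placed, all carry over after the mirror relabeling. Consequently the algorithm still runs in $p$ passes, keeps only $\Order(1)$ letters together with the counters $n,p,t,m$ in memory, and shuffles entirely inside the $n$ cells of $X$, which gives expansion $1$.

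The one genuinely new point, and where I expect the real work to be, is checking the two restrictions peculiar to the RW/WO model: that $Y$ is never read and that each of its cells is written only once. Since all rearrangement now lives on $X$, the values flushed to $Y$ must already be final and must appear in the exact order $X[n],\dots,X[1]$, so I must verify that at the moment a block is appended to $Y$ its elements sit in their definitive reversed positions on $X$ and that no later pass ever revisits those $Y$-cells. Equivalently, one may view the algorithm as the time-reversal of Algorithm~\ref{RORWalg}, under which a read-only input stream becomes write-only and a read-write output stream becomes read-write input; the delicate step there is that reversing time turns each left-to-right pass into a right-to-left one, so one must re-index the simulation to keep the passes unidirectional in the allowed direction. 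Either way, once the write-once, no-read discipline on $Y$ is established, the pass count, space, and expansion bounds follow exactly as in Theorem~\ref{theo_RORW_alg}.
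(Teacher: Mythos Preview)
Your proposal is correct and follows essentially the same approach as the paper, which merely sketches the proof by saying the algorithm is the mirror of Algorithm~\ref{RORWalg}: the element-shuffling For loop is applied on the read-write stream $X$ \emph{before} a block is emitted to $Y$, and the block sizes increase rather than decrease over the passes. Your description captures exactly these two modifications, and your identification of the write-once, no-read discipline on $Y$ as the only genuinely new verification is spot on.
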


We omit the proof of this theorem, as the algorithm is extremely
similar to the one used in the Read-Only/Read-Write model. The main
difference is that the For loop that moves one element of each block
to its correct place is applied before a block is copied on $Y$ and
not after like in Algorithm~\ref{RORWalg}. Similarly, the new
algorithm starts with blocks of size $s$ and ends with size
$\Thet(\sqrt{n/s})$ instead of having blocks of decreasing size.

\smallskip

\textbf{Lower bound.}We employ techniques similar to the ones we used
in the proof of Theorem~\ref{theo_ROWO}. However, here we will not
consider individual cells on the stream but instead blocks of size $k
= \sqrt{ns}$, where $s$ is the memory space. This allows us to easily
bound the amount of information each block receives.

\begin{theorem}\label{theo_RORW_lb}
Let $0 < \eps \leq {1}/{3}$. Every $p$-pass $\lambda$-expansion RO/RW
streaming algorithm 
for $\Reverse$ on $\{0,1\}^n$ with error $\eps$ requires space $\Omeg
(n/{p^2})$. 
\end{theorem}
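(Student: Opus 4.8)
The plan is to reuse the information-theoretic machinery of Theorem~\ref{theo_ROWO}, but to track information at the granularity of blocks of size $k=\sqrt{ns}$ rather than single cells. I would work in the equivalent model of that proof: the algorithm copies $X$ (not $\Reverse(X)$) onto $Y$, reads $X$ left-to-right and processes $Y$ right-to-left, so that the two heads meet once per pass at an index $L^t$, and correctness means $Y[i]=X[i]$. Partition $[1,n]$ into $N=n/k=\sqrt{n/s}$ consecutive blocks of length $k$, and for each block $B$ let $X_B$ be the input cells at the positions of $B$ (which is exactly the content $B$ must eventually hold). First I would dispose of the easy regime: if $p\ge N=\sqrt{n/s}$ then $p^2\ge n/s$ and the bound $s\ge n/p^2$ holds immediately, so from now on assume $p<N$.

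The heart of the argument is that, in a single pass, fresh input can reach its correct block only around the meeting point. Along a pass one head moves left-to-right while the other moves right-to-left, so the difference of their positions is monotone; consequently both heads are simultaneously inside a given block only if the meeting index $L^t$ lies in that block. For any block $B$ not containing $L^t$, the head on $Y$ therefore writes the cells of $B$ while the head on $X$ is outside $B$, so the input cells read during that sub-pass are disjoint from $X_B$, and any information about $X_B$ deposited into $B$ must have transited through the $s$-bit memory $M$. Since $p<N$, by pigeonhole there is a block $B^\star$ that contains no meeting point $L^t$ over all $p$ passes.

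For this block I would bound the information it accumulates about $X_{B^\star}$ pass by pass. Let $W^t$ be the contents of $B^\star$ on $Y$ after pass $t$, with $W^0$ empty. Given the randomness $R$, $W^t$ is a function of $W^{t-1}$, of the memory state when the head enters $B^\star$ (at most $s$ bits), and of the input cells read while the head is inside $B^\star$; by the previous paragraph those cells are disjoint from $X_{B^\star}$. The data processing inequality (Fact~\ref{DTI}) together with $\h(M)\le s$ then gives $\mi(X_{B^\star}:W^t)\le \mi(X_{B^\star}:W^{t-1})+s+\Order(\log n)$, the logarithmic term absorbing the input- and randomness-dependent index $L^t$ and the head positions. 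Telescoping over the $p$ passes yields $\mi(X_{B^\star}:W^p)\le p\,s+\Order(p\log n)$. On the other hand, since the algorithm errs with probability at most $\eps\le 1/3$, block $B^\star$ is correct with probability at least $1-\eps$, and a Fano-type estimate as in Lemma~\ref{lem H(e/q)} gives $\mi(X_{B^\star}:W^p)\ge k(1-\eps)-\h(\eps)=\Omeg(k)$. Combining the two bounds gives $\sqrt{ns}=\Omeg(k)=\Order(p\,s+p\log n)$, and hence, absorbing the lower-order term (we may assume $s\ge\log n$, as otherwise the memory cannot even store a stream index), $s=\Omeg(n/p^2)$.

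The main obstacle is the middle step: making precise that for an uncrossed block the freshly read input cells lie outside $B^\star$, since within a pass the two heads may be arbitrarily desynchronized. I would handle this purely from the monotonicity of the difference of head positions, as sketched above, and fold all input- and randomness-dependent bookkeeping (the value of $L^t$, the entry memory state, the head positions) into the $\Order(\log n)$ slack, exactly as such terms are absorbed in Lemma~\ref{lem s(n)}. Expansion plays no role: a longer read-write $Y$ only supplies extra scratch cells, and information about $X_{B^\star}$ still reaches $B^\star$ only through the $s$-bit memory, so the bound is independent of $\lambda$.
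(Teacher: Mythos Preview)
Your block decomposition, the per-pass information bound of $s$ for an uncrossed block, and the final Fano step are exactly the paper's ingredients, and the block size $k=\sqrt{ns}$ is the right choice. The gap is in the pigeonhole step. The meeting indices $L^1,\dots,L^p$ are random variables: they depend on the input $X$ (and on the algorithm's coins), so the ``uncrossed'' block $B^\star$ you select depends on $X$ as well. Once $B^\star=B^\star(X,R)$, the quantity $\mi(X_{B^\star}:W^p)$ is not a mutual information between two random variables in the usual sense; neither the telescoping upper bound (which needs a fixed block so that the data-processing step applies to a fixed function) nor the Fano lower bound (which needs the marginal of $X_{B^\star}$ to be uniform on $\{0,1\}^k$) is valid. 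Fixing the randomness via Yao does not help, since $L^t$ still varies with $X$. The obstacle you flag---that the freshly read input cells lie outside the uncrossed block---is actually the easy part; the monotonicity argument you sketch is fine.

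The paper's proof is essentially your argument with the pigeonhole replaced by an average. It keeps \emph{all} blocks in play, conditions on $L_t=j$, and bounds the per-pass gain $\Delta^t_{i,j}=\mi(X_i:Y_i^t\mid L_t=j,X_{-i})-\mi(X_i:Y_i^{t-1}\mid L_t=j,X_{-i})$ by $s$ when $i\neq j$ and by $n/k$ when $i=j$. Summing over $i$ gives $n/k+ks$ per pass; taking expectation over $j\sim L_t$ and using $\h(L_t)\le\log k$ to remove the conditioning (Fact~\ref{I(A:B|C)}) costs an extra $k\log k$; summing over $t$ yields $\sum_i \mi(X_i:Y_i^p\mid X_{-i})\le p(n/k+ks+k\log k)$. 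With the Fano lower bound $\mi(X_i:Y_i^p\mid X_{-i})\ge(1-\h(\eps))\,n/k$ on the left and $k=\sqrt{n/s}$, this gives $s=\Omeg(n/p^2)$. Your proposal can be repaired by exactly this averaging; the per-block bound you already have is the main content.
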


\begin{proof}
Consider a $p$-pass $\lambda$-expansion randomized RO/RW algorithm for
$\Reverse$ on $\{0,1\}^n$ with error 
$\eps$ and space $s$, with $s \geq \log n$.  Like with
Theorem~\ref{theo_ROWO}, we consider the model where we want $Y = X$,
but $Y$ is processed from right to left.

As in the proof of Theorem~\ref{theo_ROWO}, we assume passes are
synchronized at the cost of a factor at most $2$ in $p$. We also keep
similar notations : $X$ is the input stream uniformly distributed in
$\{0,1\}^n$, and for each $1 \leq t \leq p$, $Y^t \in \{0,1,\bot\}^n$
is the data currently on output stream $Y$ at pass $t$. Unlike with
$Z^t$ in the previous section, this includes the data 
previously written, as in this model we can 
read it and modify it. Let $1\leq k\leq n$ be some parameter.
We now think on $X,Y^t$ as sequences of $k$ blocks of size $n/k$, and
consider each block as a symbol. If $\lambda > 1$, then everything
written on the output stream (the only one that can grow) after the
$n$-th bit is considered to be part of the $Y_k^t$.
For instance $X_i$ denotes the $i$-th block of $X$, which is of size $n/k$ bits.
We write $X_{-i}$ for $X$ without its $i$-th block, and $X_{>i}$
(resp. $X_{<i}$) for the last $(k-i)$ blocks of $X$
(resp. the first $(i-1)$ blocks). 
For each $1 \leq t \leq p$, let $L_t \in\{0,\dots,k-1\}$ be the block where the input head and the output head
meet during the $t$-th pass. Since passes are synchronised, $L_t$ is
unique and is the only block where both heads can be simultaneously
during the $t$-th pass. Let $M_i^t$ be the memory state as the output head
enters the $i$-th block during $t$-th pass.

Consider a pass $t$ and a block $i$. We would like to have an upper
bound on the amount of mutual information between $Y^t_i$ and $X_i$
that is gained during pass $t$ (with regards to information
known at pass $t-1$). Let $\Delta_{i,j}^t =
\mi(X_i:Y^t_i|L_t=j,X_{-i}) - \mi(X_i:Y^{t-1}_i|L_t=j,X_{-i})$
for some $i$ and $j$.  Of course, if $i=j$, without looking
inside the block structure we only have the trivial bound
$\Delta_{i,i}^t \leq H(X_i) = n/k$. It is however easier to bound
other blocks. Assume without loss of generality that $i < j$.
We use the data processing inequality $I(f(A):B|C) \leq
I(A:B|C)$ with $Y^t_i$ as a function of $M_i^t$ and $Y_i^{t-1}$. This gives us
$$\Delta^t_{i,j} \leq \mi(X_i:X_{>i},M_i^t,Y^{t-1}_i|L_t=j,X_{-i}) - \mi(X_i:Y^{t-1}_i|L_t=j,X_{-i}).$$
We can remove $X_{>i}$ which is
contained in the conditioning. Applying the chain rule, we cancel out
the second term and are left with $$\Delta^t_{i,j} \leq
\mi(X_i:M_i^t|L_t=j,X_{-i},Y^{t-1}_i) \leq \h(M_i^t) \leq s.$$
The same holds if $j<i$ instead. If $j=i$, then $\Delta_{i,j}^t \leq n/k$
because $\h(X_i)=n/k$.

We fix $j$, sum over $i$ and get  
$\sum_{i=0}^{k-1}\Delta^t_{i,j} \leq n/k + ks$. The
expectation over $j \sim L_t$ is 
$$\sum_{i=0}^{k-1}\mi(X_i:Y^t_i|L_t,X_{-i})-\mi(X_i:Y^{t-1}_i|L_t,X_{-i}) \leq n/k+ks.$$
From Fact~\ref{I(A:B|C)}, we get the following inequalities:
$$\mi(X_i:Y^t_i|L_t,X_{-i}) \geq \mi(X_i:Y^t_i|X_{-i}) - \h(L_t),$$
$$\mi(X_i:Y^{t-1}_i|L_t,X_{-i}) \leq \mi(X_i:Y^{t-1}_i|X_{-i}) + \h(L_t).$$
Therefore, 
$$\sum_{i=0}^{k-1}\mi(X_i:Y^t_i|X_{-i}) - \mi(X_i:Y^{t-1}_i|X_{-i})
\leq n/k + ks + k\log k.$$ 
Summing over $t$ yields 
$$\sum_{i=0}^{k-1}\mi(X_i:Y^p_i|X_{-i}) \leq p(n/k+ks+k\log k).$$ 
By hypothesis $s\geq\log n$, $\eps\leq 1/3$ and 
$\mi(X_i:Y^p_i|X_{-i}) \geq (1 - \h(\eps))n/k$. 
If $k=\sqrt{n/s}$, it follows that $s = \Omeg(n/{p^2})$.
\end{proof}

\begin{fact}
\label{I(A:B|C)}
Let $A$, $B$, $C$, $D$ be random variables. Then 
$$\mi(A:B|D) -\h(C|D) \leq  \mi(A:B|C,D) \leq \mi(A:B|D) + \h(C|D).$$
\end{fact}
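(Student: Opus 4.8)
The plan is to reduce both inequalities to the nonnegativity of conditional mutual information together with the elementary bound $\mi(A:C|D)\leq\h(C|D)$, via a single chain-rule identity. Working throughout with the definition $\mi(X:Y|Z)=\h(X|Z)-\h(X|Y,Z)$, I would first establish the key identity
\[
\mi(A:B|C,D)-\mi(A:B|D)=\mi(A:C|B,D)-\mi(A:C|D).
\]
This is checked by expanding each of the four terms (for instance $\mi(A:B|C,D)=\h(A|C,D)-\h(A|B,C,D)$) and observing that both sides equal $\h(A|C,D)-\h(A|B,C,D)-\h(A|D)+\h(A|B,D)$; equivalently, it is just the two-way expansion of $\mi(A:BC|D)$ by the chain rule, $\mi(A:B|D)+\mi(A:C|B,D)=\mi(A:C|D)+\mi(A:B|C,D)$. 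The purpose of this rewriting is that it replaces the quantity we want to control by the information that $C$ carries, which is precisely what $\h(C|D)$ should bound.

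Given the identity, both inequalities follow immediately. For the upper bound I would drop the nonnegative term $\mi(A:C|D)\geq 0$ to get $\mi(A:B|C,D)-\mi(A:B|D)\leq\mi(A:C|B,D)$, and then, using symmetry of mutual information, bound $\mi(A:C|B,D)=\h(C|B,D)-\h(C|A,B,D)\leq\h(C|B,D)\leq\h(C|D)$, where the first inequality is nonnegativity of entropy and the second is that conditioning does not increase entropy (itself the statement $\mi(C:B|D)\geq 0$). For the lower bound I would instead drop $\mi(A:C|B,D)\geq 0$ to obtain $\mi(A:B|C,D)-\mi(A:B|D)\geq-\mi(A:C|D)$, and then use $\mi(A:C|D)=\h(C|D)-\h(C|A,D)\leq\h(C|D)$, again by symmetry and nonnegativity of entropy. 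Rearranging the two resulting bounds on the difference yields exactly the claimed inequalities.

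There is no substantial obstacle here: every step uses only the definitions recalled in the preliminaries and the stated facts that entropy and mutual information are nonnegative and that mutual information is symmetric. The only mildly delicate point is bookkeeping in the chain-rule identity, namely carrying the conditioning variables correctly so that the information about $C$ lands on the side that $\h(C|D)$ controls rather than on the wrong term. Once the identity is in place, each of the two inequalities is a single line.
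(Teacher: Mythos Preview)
Your proof is correct. The paper itself states this fact without proof, treating it as a standard information-theoretic inequality; your argument via the chain-rule identity $\mi(A:BC|D)=\mi(A:B|D)+\mi(A:C|B,D)=\mi(A:C|D)+\mi(A:B|C,D)$ together with nonnegativity of conditional mutual information and the bound $\mi(A:C|\cdot)\leq\h(C|D)$ is exactly the expected elementary justification.
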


\begin{theorem}\label{theo_RWWO_lb}
Let $0 < \eps \leq {1}/{3}$. Every $p$-pass $\lambda$-expansion RW/WO
streaming algorithm 
for $\Reverse$ on $\{0,1\}^n$ with error $\eps$ requires space 
$\Omeg(n/{p^2})$. 
\end{theorem}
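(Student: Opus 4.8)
The plan is to follow the block-based argument of Theorem~\ref{theo_RORW_lb} as closely as possible, since the two models are dual, and to isolate the single place where the read-write \emph{input} stream (rather than a read-write output) forces a change. As before, I would pass to the equivalent model in which we want $Y=X$ with $Y$ processed from right to left, assume synchronized passes at the cost of a factor $2$ in $p$, take the input uniform on $\{0,1\}^n$, and cut both streams into $k$ blocks of $n/k$ bits. I write $X_i$ for the $i$-th block of the original input, $Y^t_i$ for the content of output block $i$ after pass $t$ (write-once, hence monotone in $t$), and $\tilde X^t$ for the \emph{evolving} content of the input stream at the start of pass $t$, with $\tilde X^1=X$. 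Let $L_t$ be the unique block where the two heads cross during pass $t$, and let memory snapshots have at most $s$ bits.

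The essential new idea is the choice of tracked quantity. Because the input is now read-write, information about $X_i$ can be copied out of input block $i$ into \emph{other} input blocks and only later moved to the output, so tracking $\mi(X_i:Y^t_i\mid X_{-i})$ alone (as in Theorem~\ref{theo_RORW_lb}) would undercount such relocations. I would therefore track the augmented potential $\phi^t_i=\mi(X_i:(Y^t_i,\tilde X^t_{-i})\mid X_{-i})$, i.e.\ the information about the original block $X_i$ contained jointly in output block $i$ and in all current input blocks other than $i$, given the original content of the other blocks. Two boundary facts then come for free: $\phi^0_i=0$, because initially $\tilde X^0_{-i}=X_{-i}$ sits in the conditioning and $Y^0$ is empty; and by monotonicity of mutual information $\phi^p_i\geq\mi(X_i:Y^p_i\mid X_{-i})$, which Fano's inequality together with $\eps$-correctness bounds below by $(1-\Order(\eps))\,n/k$, so that $\sum_i\phi^p_i=\Omeg(n)$ for $\eps\leq 1/3$.

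The heart of the proof is the per-pass increment bound $\phi^t_i-\phi^{t-1}_i\leq\Order(s)$ for $i\neq L_t$ (and the trivial $\leq n/k$ for $i=L_t$). The geometry is the same as in Theorem~\ref{theo_RORW_lb}: when $i\neq L_t$, at every instant of pass $t$ at which one head processes block $i$, the other head lies in a block different from $i$. Hence the combined state $(Y^t_i,\tilde X^t_{-i})$ after the pass is a deterministic function of the previous combined state $(Y^{t-1}_i,\tilde X^{t-1}_{-i})$, the blocks $\neq i$ read during the pass (which already belong to that previous state), the random coins, and only $\Order(1)$ memory snapshots of at most $s$ bits each carrying whatever was extracted from input block $i$. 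Applying the data processing inequality (Fact~\ref{DTI}) and the chain rule then yields the increment bound conditioned on $L_t=j$, and Fact~\ref{I(A:B|C)} removes the conditioning on $L_t$ at a cost of $\Order(\log k)$ per block. Summing over the $k$ blocks (at most one of which is $L_t$) and then over the $p$ passes gives $\sum_i\phi^p_i\leq p\,(n/k+\Order(ks)+\Order(k\log k))$; combining with $\sum_i\phi^p_i=\Omeg(n)$ and setting $k=\sqrt{n/s}$ forces $s=\Omeg(n/p^2)$, using $s\geq\log n$ to absorb the $k\log k$ term.

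The step I expect to be the main obstacle is precisely the claim that the combined state depends on the content of input block $i$ only through $\Order(1)$ snapshots of size $s$: this is where the read-write input could a priori leak block $i$ across the stream, and it is exactly the inclusion of $\tilde X^t_{-i}$ in the potential that neutralises it, since moving already-extracted information from one input block to another, or from an input block to the output, does not increase $\phi^t_i$. Making this ``information can only enter through the memory bottleneck'' argument rigorous --- tracking the $\Order(1)$ boundary crossings of each head through block $i$, and verifying that no fresh dependence on $\tilde X^{t-1}_i$ sneaks in while the output head is busy writing block $i$ (which holds because during that sub-interval the input head is reading blocks $\neq i$) --- is the technical crux; everything else transfers essentially verbatim from Theorem~\ref{theo_RORW_lb}.
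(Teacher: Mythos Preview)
Your proposal is correct, and the overall architecture --- synchronize passes, chop into $k=\sqrt{n/s}$ blocks, bound a per-pass information increment by $\Order(s)$ off the crossing block and by $n/k$ on it, then telescope --- matches the paper exactly. The point of genuine difference is the choice of potential. The paper keeps the tracked quantity as $\mi(X_i:Y_i^t\mid X_{-i}^{\le t-1})$, i.e.\ it \emph{conditions} on the full history $(X_{-i},\tilde X_{-i}^1,\dots,\tilde X_{-i}^{t-1})$ of the other input blocks; this lets the per-pass data-processing step go through exactly as in Theorem~\ref{theo_RORW_lb}, but creates a new difficulty because the conditioning set changes with $t$, and the paper must insert a separate estimate $\h(\tilde X_{-i}^t\mid X_{-i}^{\le t-1})\le 2s$ to bridge consecutive passes --- this is precisely the step where the write-only hypothesis is invoked. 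Your potential $\phi_i^t=\mi(X_i:(Y_i^t,\tilde X_{-i}^t)\mid X_{-i})$ instead moves the evolving input content from the conditioning into the tracked variable. This buys you a fixed conditioning set and hence a clean telescoping sum with no bridging step; the price is that the data-processing claim (that $(Y_i^t,\tilde X_{-i}^t)$ is a function of $(Y_i^{t-1},\tilde X_{-i}^{t-1})$, two memory snapshots, and coins) now carries the full weight of the write-only assumption, since without it the memory would also absorb information from $Y_{-i}^{t-1}$, which is absent from your previous state. Your sketch of that claim is accurate --- the two snapshots are the memory at the start of the pass and the memory as the input head exits block $i$, and the input head never rereads cells it has modified in the current pass --- so the argument closes. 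In short: same skeleton, different bookkeeping; your version trades the paper's explicit ``changing conditioning'' correction for a slightly heavier single data-processing step, and is arguably tidier.
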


\begin{proof}[Proof of Theorem~\ref{theo_RWWO_lb}]
Consider a $p$-pass $\lambda$-expansion randomized RW/WO algorithm for
$\Reverse$ on $\{0,1\}^n$ with error 
$\eps$ and space $s$, with $s \geq \log n$. As before, we consider the
model where the algorithm writes $X$ on $Y$, using passes from right
to left. This proof is similar to the proof of Theorem~\ref{theo_RORW_lb}.

We proceed as before: we assume
passes are synchronized at the cost of a factor at most $2$ in $p$. We
also keep similar notations : $X$ is the input stream uniformly
distributed in $\{0,1\}^n$, and for each $1 \leq t \leq p$, $X^t \in
\{0,1\}^n$ is the content of the input stream and $Y^t \in
\{0,1,\bot\}^n$ is the content of the output stream
$Y$ at pass $t$. Let $1\leq k\leq n$ be some parameter.
As before, whe think of $X,X^t,Y^t$ as sequences of $k$ blocks of size
$n/k$, and $X_i$ denotes the $i$-th block of $X$. If $\lambda > 1$,
then everithing written on the inpu stream after the $n$-th bit is
considered part of $X_k^t$. We write $X_{-i}$,
$X_{>i}$ and $X_{<i}$ as before. We also define $X_i^{\leq t}$ as the
$(t+1)$-uple $(X_i,X_i^1, \dots, X_i^t)$, i.e. the history of the
$i$-th block until pass $t$.

As in prevoius proofs, for each $1 \leq t \leq p$, let $L_t
\in\{0,\dots,k-1\}$ be the block where the input head and the output
head meet during the $t$-th pass. $L_t$ is
unique and is the only block where both heads can be simultaneously
during the $t$-th pass. Let $M_i^t$ be the memory state as the output head
enters the $i$-th block during $t$-th pass.

Consider a pass $t$ and a block $i$. As with
Theorem~\ref{theo_RORW_lb}, we would like to have an upper
bound on the amount of mutual information between $Y^t_i$ and $X_i$
that is gained during pass $t$, assuming $L_t \not = i$.
Let $\Delta_{i,j}^t = \mi(X_i:Y^t_i|L_t=j,X_{-i}^{\leq t-1}) -
\mi(X_i:Y^{t-1}_i|L_t=j,X_{-i}^{\leq t-1})$ for some $j > i$.
By the data processing inequality, we have 
$I(f(A):B|C) \leq I(A:B|C)$. Therefore, 
$$\Delta^t_{i,j} \leq \mi(X_i:X_{>i}^{t-1},M_i^t,Y^{t-1}_i|L_t=j,X_{-i}^{\leq t-1}) - \mi(X_i:Y^{t-1}_i|L_t=j,X_{-i}^{\leq t-1}).$$
We can remove $X_{>i}^{t-1}$ which is
contained in the conditioning. Applying the chain rule, we cancel out
the second term and are left with 
$$\Delta^t_{i,j} \leq \mi(X_i:M_i^t|L_t=j,X_{-i}^{\leq t-1},Y^{t-1}_i) \leq \h(M_i^t) \leq s.$$
The same holds if $j<i$ instead. If $j=i$, then 
$\Delta_{i,j}^t \leq n/k$ because $\h(X_i)=n/k$.

We fix $j$, sum over $i$ and get  
$\sum_{i=0}^{k-1}\Delta^t_{i,j} \leq n/k + ks$. As before, by taking
the expectation over $j \sim L_t$ and then using Fact~\ref{I(A:B|C)},
we can remove the condition $L_t=j$. This gives us 
$$\sum_{i=0}^{k-1}\mi(X_i:Y^t_i|X_{-i}^{\leq t-1}) - \mi(X_i:Y^{t-1}_i|X_{-i}^{\leq t-1}) \leq n/k + ks + k\log k.$$
We cannot sum over $t$ yet because the
conditionning $X_{-i}^{\leq t-1}$ depends on $t$. However, because
$X_{-i}^t$ is a function of $X_{-i}^{t-1}$, the memory state at the
beginning of the pass and the memory as the head on the input tape
leaves the $i$-th block, applying Fact~\ref{I(A:B|C)} again yields 
$$\mi(X_i:Y^t_i|X_{-i}^{\leq t}) - \mi(X_i:Y^t_i|X_{-i}^{\leq t-1}) \leq \h(X_{-i}^t|X_{-i}^{\leq t-1}) \leq 2s.$$
This is a consequence of 
the output stream being write-only, which we had not used until now.

Therefore $\sum_{i=0}^{k-1}\mi(X_i:Y^t_i|X_{-i}^{\leq t-1}) -\mi(X_i:Y^{t-1}_i|X_{-i}^{\leq t-1}) \leq n/k + 3ks + k\log k.$
Summing over $t$ yields 
$$\sum_{i=0}^{k-1}\mi(X_i:Y^p_i|X_{-i}) \leq p(n/k+3ks+k\log k).$$
By hypothesis $s\geq\log n$, $\eps\leq 1/3$ and 
$\mi(X_i:Y^p_i|X_{-i}) \geq (1 - \h(\eps))n/k$. If $k=\sqrt{n/s}$, 
it follows that $s = \Omeg(n/{p^2})$.
\end{proof}

Note that the proof still works if we relax the write-only model by
allowing the algorithm to rewrite over data that was previously written
on the output stream.

\subsection{Complexity of Read-write/Read-write}
Algorithm~\ref{RWRWalg} proceeds by dichotomy. For simplicity, we
assume that $n$ is a power of $2$, but the algorithm can
easily be adapted while keeping $\lambda=1$. At each step, it splits
the input in two, copies one half to its correct place on the stream,
then makes another pass to copy the other half, effectively exchanging
them. 

\begin{lstlisting}[caption={RW/RW streaming algorithm for $\Reverse$},
label=RWRWalg,captionpos=t,float,abovecaptionskip=-\medskipamount,mathescape]
$W_0 \gets X$; $W_1 \gets Y$; $\alpha \gets 0$; // Rename the streams
$k \gets n$; // Size of current blocks
While $k > 1$
  $k \gets k/2$ (*@\label{l/2}@*)
  For $i=1$ to $n/2k - 1$ (*@\label{pass1}@*)
    $W_{1-\alpha}[(2i+1)k+1,(2i+2)k] \gets W_{\alpha}[2ik+1,(2i+1)k]$ // One pass
  For $i=0$ to $n/2k - 1$  (*@\label{pass2}@*)
    $W_{1-\alpha}[2ik+1,(2i+1)k] \gets  W_{\alpha}[(2i+1)k+1,(2i+2)k]$ // Another pass
  $\alpha \gets 1- \alpha$; Erase $W_{1-\alpha}$; // Exchange the roles of the streams
$Y \gets W_{\alpha}$ // Copy the final result on output tape
\end{lstlisting}

\begin{theorem}
\label{theo_RWRW}
Algorithm~\ref{RWRWalg} is a deterministic
$O(\log n)$-passes RW/RW streaming algorithm for
$\Reverse$ on $\Sigma^n$ with space $\Order(\log n)$ and expansion
$1$.
\end{theorem}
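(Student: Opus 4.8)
I would prove correctness separately from the complexity bounds, and for correctness I would exploit the fact that, writing positions in binary with $\ell=\log_2 n$ bits and indexing from $0$, the reversal permutation sends position $x$ to $n-1-x$, which is exactly the bitwise complement of $x$. The plan is to show that each iteration of the While loop performs one ``bit flip'': the iteration with current block size $k$ (line~\ref{l/2}) swaps every adjacent pair of $k$-blocks, and swapping adjacent $k$-blocks is precisely the operation of complementing the single bit at position $\log_2 k$ of the index $x$ (the lowest bit of the block index $\lfloor x/k\rfloor$). As $k$ runs through $n/2, n/4, \dots, 1$, this flips bits $\ell-1, \ell-2, \dots, 0$ in turn. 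Since flips on distinct bits are commuting involutions, their composition is the full complement, i.e.\ reversal, independently of the order in which they are applied.

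\textbf{Correctness by induction.} I would make this precise with the invariant: after the first $s$ iterations of the While loop, the currently active stream $W_\alpha$ holds, at each position $x$, the original input symbol from position $\sigma_s(x)$, where $\sigma_s$ complements the top $s$ bits of $x$ and leaves the low $\ell-s$ bits unchanged. The base case $s=0$ holds since $W_0=X$. For the inductive step, the two \emph{For} loops (lines~\ref{pass1} and~\ref{pass2}) together copy, for each pair of adjacent $k$-blocks, the even-indexed block of $W_\alpha$ into the odd slot of $W_{1-\alpha}$ and the odd-indexed block into the even slot, which is exactly swapping adjacent $k$-blocks and hence flipping bit $\ell-s-1$ of the position; composing with $\sigma_s$ gives $\sigma_{s+1}$, as flipping that extra bit extends the complemented prefix by one. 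After $\ell$ iterations we obtain $\sigma_\ell$, the full complement, so $W_\alpha[x]=X[n-1-x]$ (in $0$-indexed terms), which is $\Reverse(X)$; the final line $Y\gets W_\alpha$ delivers it on the output stream.

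\textbf{Complexity.} For the pass count, each While iteration consists of exactly the two passes on lines~\ref{pass1} and~\ref{pass2}, and there are $\log_2 n$ iterations (one per halving of $k$ from $n$ down to $1$), giving $2\log_2 n$ passes; the concluding copy $Y\gets W_\alpha$ and the ``Erase'' steps fold into these passes (erasure is just the overwrite performed by the next iteration's writes, and the final copy costs $\Order(1)$ extra passes), so the total is $\Order(\log n)$. For space, the algorithm stores only the counters $k,\alpha,i,t,n$, each $\Order(\log n)$ bits, together with an $\Order(1)$-symbol transfer buffer of size $\Order(\log|\Sigma|)=\Order(\log n)$, hence $\Order(\log n)$ memory. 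For expansion, observe that every write lands inside $[1,n]$, the destination blocks tile $[1,n]$ exactly, and the algorithm uses only the two streams $X,Y$, each of length $n$ at all times; thus $\lambda=1$.

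\textbf{Main obstacle.} The delicate point is verifying that each block-swap is genuinely realizable in a \emph{single} left-to-right pass with only $\Order(1)$ buffering, as the model forbids backward head motion and caps the memory. I would argue that in the first \emph{For} loop the read head traverses the even blocks of $W_\alpha$ while the write head traverses the odd blocks of $W_{1-\alpha}$, with the write head kept exactly $k$ cells ahead of the read head: both heads advance monotonically, skipping alternate blocks without accessing them, so each symbol read is immediately written $k$ positions later and only one symbol need be held in memory. Crucially, since the two loops read from $W_\alpha$ but write to the distinct stream $W_{1-\alpha}$, and the first loop writes only odd slots while the second writes only even slots, no cell is read after being overwritten within an iteration; this is what makes $\Order(1)$ buffering suffice and is the key fact to check carefully.
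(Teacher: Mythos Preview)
Your proof is correct and follows essentially the same approach as the paper's, which also argues correctness by a recursion on block positions and counts two passes per halving iteration for the $\Order(\log n)$ bound. Your bit-complement invariant (each iteration flips one fixed bit of the index) is a cleaner and more explicit formulation of the ``simple recursion'' the paper invokes in one line, and your discussion of why each \emph{For} loop is realizable in a single left-to-right pass with $\Order(1)$ buffering is more careful than the paper's brief justification, but the underlying ideas coincide.
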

\begin{proof}
Since the algorithm can read and write on both tapes, they perform
very similar roles. We rename the input stream $W_0$ and the output stream
$W_1$.
By a simple recursion, we see that whenever a block
$W_{1-\alpha}[tk,(t+1)k]$ is moved, it is moved in the place that
$\Reverse(W_{1-\alpha}[tk,(t+1)k])$ will occupy. Therefore, the
algorithm is correct.

Now we prove the bounds on $s$ and $p$. Algorithm~\ref{RWRWalg} never
needs to remember a value, only the current index and current pass, so
$s = \Order(\log n)$. Since the length of blocks copied is divided by
$2$ at each execution of line~\ref{l/2}, it ends after a logarithmic
number of executions of the While loop. 
Each iteration of the While loop requires
two passes, one for each of the two For loops (lines~\ref{pass1} and~\ref{pass2}). Therefore the total number of passes is in $\Order(\log n)$.
\end{proof}

\section{Sorting with two streams}
\label{Sort}
$\Sort$ is generally more complex than $\Reverse$. Even in
the RW/RW model, we are not able to present a deterministic algorithm as
efficient as Algorithm~\ref{RWRWalg} for $\Reverse$.
With three streams, the problem becomes easy
since we can Merge Sort two streams, and write the
result of each step on the third one.

\subsection{Merge Sort}

We begin with an algorithm inspired from~\cite{cy91}. 
The algorithm works as a Merge Sort. We call $B_i^t$ the $i$-th sorted
block at the $t$-th iteration of the While loop, consisting of the
sorted values of $X[2^ti+1,2^t(i+1)]$.
Since there is no third stream to write on when two blocks $B_{2i-1}^t$
and $B_{2i}^t$ are merged into $B_i^{t+1}$, 
we label each element with its position in the new block.
Then both halves are copied on the same stream again so that they can be merged with the help of the labels. 
This improves the expansion of~\cite{cy91}  from $n$ to $\log
n$. However it is somewhat unsatisfying because when $\Sigma$ is of
constant size, our algorithm still has $\Omeg(\log n)$ expansion.

\begin{theorem}
\label{theo_MS}
Algorithm~\ref{MS} is a deterministic
  $\Order(\log n)$-pass RW/RW streaming algorithm for
  $\Sort$ on $\Sigma^n$ with space $\Order(\log n)$ and expansion
$ \Order ( {(\log n)}/{\log |\Sigma|} )$.
\end{theorem}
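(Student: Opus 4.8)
The plan is to analyze Algorithm~\ref{MS} as a standard bottom-up Merge Sort and to isolate the single nontrivial point: how to merge two adjacent sorted runs into one \emph{without} a third stream, using only forward passes and $\Order(\log n)$ memory. First I would establish correctness by induction on the level $t$ of the While loop, with the invariant that after the $t$-th iteration the working stream is partitioned into consecutive sorted blocks $B_i^{t}$, each $B_i^{t}$ being the sorted multiset of the values $X[2^t i+1, 2^t(i+1)]$. The base case $t=0$ is immediate, since length-one blocks are trivially sorted. For the inductive step it suffices to check that one iteration correctly merges each pair $B_{2i-1}^{t}, B_{2i}^{t}$ into $B_i^{t+1}$; granting this, after $\log n$ iterations the single surviving block equals $\Sort(w)$, which the final line copies onto $Y$.

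For the pass count, the key observation is that the blocks are contiguous and processed left to right, so a constant number of forward sweeps over the two streams handles \emph{all} pairs at a given level at once. Hence the While loop runs $\log n$ times and the total number of passes is $\Order(\log n)$. I would then analyze the in-place merge itself, which is the heart of the argument. The obstacle is twofold: a single forward head cannot simultaneously read the run $B_{2i}^{t}$ and overwrite the region occupied by $B_{2i-1}^{t}, B_{2i}^{t}$ on the same stream; and a forward pass cannot compute the rank of an element of the left run before having seen the right run. I would resolve both by exploiting the second read-write stream: copy the left half $B_{2i-1}^{t}$ onto the other stream so that the two sorted runs lie on distinct streams, then run a classical two-pointer merge in which both heads advance forward in sorted order. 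During this simultaneous scan the algorithm knows, for each element, its rank in the merged block $B_i^{t+1}$, and it writes this rank as a label, an integer in $[1,2^{t+1}]$, attached to the element. Once every element carries its target offset, copying both labelled halves back and routing each element to the position dictated by its label reconstructs $B_i^{t+1}$ in sorted order. Only $\Order(1)$ elements together with a fixed set of $\Order(\log n)$-bit counters (the current indices, $n$, $k$, $t$) are ever held in memory, giving space $\Order(\log n)$.

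The step I expect to be delicate, and which I would verify most carefully, is that this label-and-route merge costs only $\Order(1)$ passes per level and never requires a backward move: the two-pointer merge is forward on both streams, but the subsequent routing governed by the labels must be arrangeable as a bounded number of left-to-right sweeps rather than as an arbitrary permutation requiring repeated rescans. This is exactly the place where the unidirectional restriction bites, so checking that the merge permutation can be realized by $\Order(1)$ forward passes (possibly using both streams alternately) is the crux of the proof.

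Finally I would bound the expansion. The only overhead beyond the input is the position label stored alongside each symbol, costing $\Order(\log n)$ extra bits per cell; this amounts to working over an annotated alphabet $\Sigma'$ with $\log|\Sigma'| = \log|\Sigma| + \Order(\log n)$. Simulating one annotated cell by $(\log|\Sigma'|)/(\log|\Sigma|)$ genuine $\Sigma$-cells, as in the preliminaries, makes the stream length at most $n\cdot(\log|\Sigma|+\Order(\log n))/\log|\Sigma|$, that is, expansion $\Order((\log n)/\log|\Sigma|)$, where the additive $1$ is absorbed because $|\Sigma|=\Order(n)$ forces $\log|\Sigma|\le \log n+\Order(1)$. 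Combining the correctness invariant, the $\Order(\log n)$ pass bound, the $\Order(\log n)$ space bound, and this expansion estimate yields the theorem.
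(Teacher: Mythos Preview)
Your approach matches the paper's almost exactly: both analyze Algorithm~\ref{MS} as a bottom-up Merge Sort, argue that each level costs $\Order(1)$ passes (the paper says five), bound memory by the handful of $\Order(\log n)$-bit counters and current elements, and compute the expansion from the $\Order(\log n)$-bit position label attached to each symbol. The paper's own proof is in fact terser than yours and does not spell out the merge mechanics at all.

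The one point you explicitly leave open is the routing step: you flag as ``the crux'' that the label-governed placement must be realizable in $\Order(1)$ forward passes, but you do not supply the argument. The missing observation is short. Because each half $B_{2i-1}^t$ and $B_{2i}^t$ is already sorted, the target ranks within each half are strictly increasing. Hence routing all elements of $B_{2i-1}^t$ to their labelled positions on the other stream is a single left-to-right sweep (read positions increase, write positions increase), and likewise for $B_{2i}^t$ in a second sweep; this is precisely what the last two \texttt{For} loops of Algorithm~\ref{MS} do. With that sentence added, your proof is complete and essentially identical to the paper's.
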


\begin{proof}
Since it is an implementation of the Merge Sort algorithm, Algorithm
\ref{MS} is correct. Each iteration of the While loop corresponds to
five passes on each tape, and therefore the total number of passes is in 
$\Order(\log n)$. 
Since the algorithm only needs to remember the position of the heads,
current elements and the counters $k$, $t$, it only uses memory
$\Order(\log n)$. Finally, since the label for each element is at
most $n$, we only use  space $\log n$ on the stream to write it, and
therefore the expansion is at most $ ({\log|\Sigma| + \log
  n})/{\log |\Sigma|} = \Order (({\log n})/{\log |\Sigma|} )$.
\end{proof}

\begin{lstlisting}[caption={RW/RW streaming algorithm implementing Merge Sort},label=MS,captionpos=t,float,abovecaptionskip=-\medskipamount,mathescape]
$W_0 \gets X$; $W_1 \gets Y$; $\alpha \gets 0$;
$t \gets 1$; $k \gets 1$ // Size of the sorted blocks
Expand the input : each element has a label of size $\log n$.
While $k < n$
  For $i=1$ to $n/2k$ {Copy $B_{2i}^t$ on $W_{1-\alpha}$}
  For $i=1$ to $n/2k$ {For each $W_{\alpha}[j] \in B_{2i-1}^t \cup B_{2i}^t$ 
                       {$W_{\alpha}[j] \gets$ index of $W_{\alpha}[j]$ in $B_i^{t+1}$}}
  For $i=1$ to $n/2k$ {Copy $B_{2i}^t$ at the end of $W_{\alpha}$ after $B_{2i-1}^t$}
  For $i=1$ to $n/2k$ {For each $W_{\alpha}[j] \in B_{2i-1}^t$ 
                       {Write $W_{\alpha}[j]$ at its position on $W_{1-\alpha}$}}
  For $i=1$ to $n/2k$ {For each $W_{\alpha}[j] \in B_{2i}^t$ 
                       {Write $W_{\alpha}[j]$ at its position on $W_{1-\alpha}$}}
  $\alpha \gets 1 - \alpha$; Erase $W_{1- \alpha}$; $t \gets t+1$; $k \gets 2k$;
$Y \gets W_{\alpha}$

\end{lstlisting}

\subsection{Quick Sort}

With a Quick Sort algorithm instead of a Merge Sort, we only need
to store the current pivot (of size at most $\log n$),
without labeling elements. However, Quick Sort comes
with its own issues: the expected number of executions of the While loop 
is $\Order(\log n)$, but unless we can select a good pivot it
is $\Omeg(n)$ in the worst case. For this reason, we use a randomized
Las vegas algorithm.

A block in Algorithm~\ref{QS} is a set of elements that are still
pairwise unsorted, i.e. elements that have the same relative positions
to all pivots so far. The block $B_i^t$ is the $i$-th lower one
during the $t$-th iteration of the While loop, and $P_i^t$ is its
pivot. The block $B_{2i-1}^{t+1}$ consists of all elements in $B_i^t$
lower than $P_i^t$ and all elements equal $P_i^t$ with a lower
index. The block $B_{2i}^{t+1}$ is the complementary. Algorithm
\ref{QS} marks the borders of blocks with the symbol
$\sharp$.

Algorithm~\ref{QS} selects each pivot $P_i^t$ at random among the
elements of $B_i^t$. While it may not do so uniformly with only one
pass because $|B_i^t|$ is unknown, it has an upper bound $k \geq
|B_i^t|$. Algorithm~\ref{QS} selects $l \in \{1,\dots,k\}$ uniformly
at random, then picks $l_i$ the remainder modulo $2^{ \lceil \log
  |B_i^t| \rceil}$ of $l$. This can be computed in one pass with
$\Order(\log n)$ space by updating $l_i$ as the lower bound on
$|B_i^t|$ grows. It uses $P_i^t= B_i^t[l_i]$. $P_i^t$ is selected
uniformly at random from a subset of size at least half of $B_i^t$,
which guarantees that with high probability
$min(|B_{2i-1}^{t+1}|,|B_{2i}^t|) = \Omeg(|B_i^t|)$.

\begin{lstlisting}[caption={RW/RW streaming algorithm implementing Quick Sort},label=QS,captionpos=t,float,abovecaptionskip=-\medskipamount,mathescape]
$W_0 \gets X$; $W_1 \gets Y$; $\alpha \gets 0$;
$t \gets 1$; $K \gets 1$ // Number of unsorted blocks
While $K > 0$
  Abort if the total number of passes is $\Omeg((\log n)/\eps)$
  Expand $W_0$ adding $\Order(K)$ space for $\sharp$ and pivots
  For $i=1$ to $K$
    Find $P_i^t$ at random
    $W_{1-\alpha}[i] \gets P_i^t$
    Replace $P_i^t$ with a $\bot$ on $W_{\alpha}$ 
  For $i=1$ to $K$
    Copy $W_{1-\alpha}[i] = P_i^t$ at the start of $B_i^t$ on $W_{\alpha}$
  For $i=1$ to $K$
    Write all elements in $B_{2i-1}^{t+1}$ on $W_{1-\alpha}$
    Write $\sharp P_i \sharp$ on $W_{1-\alpha}$.
    Leave space for the rest of $B_{2i}^{t+1}$
  For $i=1$ to $K$
    Write all elements in $B_{2i}^{t+1}$ in the space left on $W_{1-\alpha}$
  $\alpha \gets 1 - \alpha$; Erase $W_{1-\alpha}$; $t \gets t+1$
  $K \gets$ new number of unsorted blocks // using an additional pass
$Y \gets W_{\alpha}$
\end{lstlisting}

\begin{theorem}
\label{theo_QS}
For all $0<\eps\leq 1/2$, Algorithm~\ref{QS}
is a randomized $\Order((\log n)/\eps)$-pass RW/RW
Las Vegas streaming algorithm for $\Sigma^n$ with failure $\eps$,
space $\Order(\log n)$ and expansion $\Order(1)$.
\end{theorem}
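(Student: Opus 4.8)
The plan is to verify the four claimed properties—correctness as a Las Vegas algorithm, the $\Order((\log n)/\eps)$ pass bound, the $\Order(\log n)$ space bound, and the $\Order(1)$ expansion—by making the number of iterations of the While loop the central quantity. First I would argue that Algorithm~\ref{QS} faithfully realizes randomized Quick Sort: each iteration picks a pivot $P_i^t$ in every current block $B_i^t$ and rewrites the stream so that $B_i^t$ splits into $B_{2i-1}^{t+1}$ (elements below $P_i^t$, ties broken by index) and $B_{2i}^{t+1}$ (the complement), with blocks delimited by $\sharp$. Since two elements share a block exactly when no pivot has yet separated them, once no unsorted block remains (i.e. $K=0$) the content of $W_\alpha$ equals $\Sort(X)$, so whenever the loop terminates normally the output is correct. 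The only other exit is the abort guard, which yields no output; hence the algorithm is Las Vegas, and it remains to show that the abort—equivalently, that the recursion depth $T$ (the number of While iterations) exceeds the threshold $\Order((\log n)/\eps)$—occurs with probability at most $\eps$.

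For the depth I would first note that a single iteration uses a constant number of passes (one to select and extract the pivots, one to reinsert them, one to write the lower halves with their separators, one to write the upper halves, and one to recount $K$), so it suffices to bound $T$. The strategy is to prove $\Exp[T]=\Order(\log n)$ and then apply Markov's inequality to get $\Pr[T > C(\log n)/\eps]\le\eps$ for a suitable constant $C$, matching the abort threshold. To bound $\Exp[T]$ I track a fixed element $x$, letting $D_x$ be the level at which its block becomes a singleton, so that $T\le\max_x D_x$. Call a pivot \emph{central} if its rank inside $x$'s current block of size $m$ lies in $[m/4,3m/4]$; a central pivot shrinks the block containing $x$ by a factor at least $3/4$ whichever side $x$ falls on. The modular sampling returns, with probability at least $1/2$, a valid position that is essentially uniform over the whole block, so a central pivot occurs with probability $\Omeg(1)$; crucially the choices of $\ell$ are drawn afresh at each level, and for the trajectory of $x$ only the pivot of the one block containing $x$ matters, so these events are independent across levels. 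Thus $D_x$ is dominated by the number of Bernoulli trials of success probability $\Omeg(1)$ needed to accumulate $\log_{4/3} n$ successes, giving $\Exp[D_x]=\Order(\log n)$ and, by a Chernoff bound, $\Pr[D_x>t]\le 2^{-\Omeg(t)}$ once $t=\Omeg(\log n)$.

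From this per-element tail I would pass to the maximum: a union bound gives $\Pr[T>t]\le n\,\Pr[D_x>t]$, which is summable beyond $t=c\log n$, whence $\Exp[T]=\Order(\log n)$ and the pass bound follows by Markov. For space, the algorithm never stores a block explicitly; it keeps only $\Order(1)$ registers of $\Order(\log n)$ bits (the indices and counters $t,K,k$, the shared random value $\ell$, the running remainder $\ell_i$ updated within one pass, and the current pivot and element), so $s=\Order(\log n)$. For expansion, at every moment the active stream carries at most $n$ data cells together with $\Order(K)=\Order(n)$ cells of pivots and $\sharp$ markers, and since the stale stream $W_{1-\alpha}$ is erased each iteration, only two streams of length $\Order(n)$ are ever alive, giving expansion $\Order(1)$.

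The step I expect to be the main obstacle is the pivot-quality argument: one must check carefully that the position-modular selection—shared across all blocks of a pass and only approximately uniform because $2^{\lceil\log m\rceil}$ need not divide $k$—still yields, for each fixed element and independently across levels, a balanced split with constant probability, and then convert the per-element expected depth into a bound on $\Exp[\max_x D_x]$ via the exponential tail, so that Markov's inequality produces exactly the $\eps$ versus $(\log n)/\eps$ trade-off asserted in the statement.
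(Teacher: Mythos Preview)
Your proposal is correct and follows the same outline as the paper: correctness comes from Algorithm~\ref{QS} being a faithful implementation of Quick Sort (hence Las Vegas via the abort guard), the space and expansion bounds are read off directly, and the pass bound follows from the pivot being balanced with constant probability so the recursion depth is $\Order(\log n)$. The paper's proof merely asserts the last point; your Chernoff/union-bound derivation of the per-element tail, followed by Markov's inequality to obtain the $\eps$ versus $(\log n)/\eps$ trade-off, is precisely the detail the paper leaves implicit, and your identification of the shared-$\ell$ sampling as the delicate step is apt (your fix---that for a fixed element only one block per level matters and the $\ell$'s are independent across levels---is the right way to recover independence).
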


\begin{proof}
Algorithm \ref{QS} is an implementation of the Quick Sort algorithm,
and is therefore correct. Its  correctness does not depend on
the quality of the pivots. The bound on the memory and the expansion
are direct. While it uses additional symbols
$\bot$ and $\sharp$, they can be easily replaced by an encoding with
symbols of $\Sigma$ appearing in the input with only $\Order(1)$
expansion. Because for each $i$ and $t$, with high probability
$min(|B_{2i-1}^{t+1}|,|B_{2i}^t|) = \Omeg(|B_i^t|)$, with high
probability Algorithm~\ref{QS} terminates in $\Order(\log n)$ passes.
\end{proof}

\section*{Open problems}
The first open problem is whether our lower bounds still hold 
if multiple rewrites are allowed in the read-only/write-only model.

Another one is whether there exists a deterministic
$\Order(\log n)$-pass RW/RW streaming algorithm for $\Sort$ 
with space $\Order(\log n)$ and expansion $\Order(1)$. 
We can derandomize Algorithm~\ref{QS} using any deterministic algorithm 
that finds a good approximation of the median.
The best
algorithm we obtained that way has $\Order(\alpha^{-1}\log n)$ passes,
$\Order(n^{\alpha})$ memory and $\Order(1)$ expansion, for any $\alpha
> 0$. We conjecture that there is no such algorithm and that having
constant expansion algorithm for $\Sort$ requires a tradeoff in number
of passes, memory space, number of streams or determinism.

Last, combining the algorithm from Theorem~\ref{theo_RORW_alg} with the
results in~\cite{mmn10}, we obtain a $\Order(\sqrt{n}/\log n)$-pass
RO/RW streaming algorithm with space $\Order((\log n)^2)$ for recognizing
well-parenthesized expressions with two parentheses types.
We do not know if this is optimal.

\bibliographystyle{plain}
\bibliography{fm13}

\end{document}